\algrenewcommand\algorithmicindent{0.7em}
\setlist{nosep}
\newtheorem{theorem}{Theorem}[section]
\newtheorem{lemma}[theorem]{Lemma}
\newtheorem{corollary}[theorem]{Corollary}
\renewenvironment{proof}[1][\proofname]{\par
  \vspace{-\topsep}
  \pushQED{\qed}%
  \normalfont
  \topsep0pt \partopsep0pt 
  \trivlist
  \item[\hskip\labelsep
        \itshape
    #1\@addpunct{.}]\ignorespaces
}{%
  \popQED\endtrivlist\@endpefalse
  \addvspace{6pt plus 6pt} 
}
\newcommand{\euler}{e}
\newcommand{\cws}{Low Cost Work Stealing}
\newcommand{\ws}{Work Stealing}
\newcommand{\spdeque}{Split Deque}
\newcommand{\spd}{SpDeque}
\newcommand{\spds}{SpDeques}
\newcommand{\spdrs}{relaxed semantics}
\newcommand{\memfence}{Memory Fence}
\newcommand{\mfence}{MFence}
\newcommand{\compandswap}{Compare-And-Swap}
\newcommand{\cas}{CAS}
\title{Scheduling computations with provably low synchronization overheads}
\author{
  Guilherme Rito\\
  ETH-Zurich\\
  \texttt{guilherme.teixeira@inf.ethz.ch}
  \and
  Hervé Paulino\\
  FCT-UNL)\\
  \texttt{herve.paulino@fct.unl.pt}
}
\date{}
\begin{document}

\maketitle

\begin{abstract}
Work Stealing has been a very successful algorithm for scheduling parallel computations, and is known to achieve high performances even for computations exhibiting fine-grained parallelism.
%
%
We present a variant of \ws\ that provably avoids most synchronization overheads by keeping processors' deques entirely private by default, and only exposing work when requested by thieves.
This is the first paper that obtains bounds on the synchronization overheads that are (essentially) independent of the total amount of work, thus corresponding to a great improvement, in both algorithm design and theory, over state-of-the-art \ws\ algorithms.
Consider any computation with work $T_{1}$ and critical-path length $T_{\infty}$ executed by $P$ processors using our scheduler.
Our analysis shows that the expected execution time is $O\left(\frac{T_{1}}{P} + T_{\infty}\right)$, and the expected synchronization overheads incurred during the execution are at most $O\left(\left(C_{\cas} + C_{\mfence}\right)PT_{\infty}\right)$, where $C_{\cas}$ and $C_{\mfence}$ respectively denote the maximum cost of executing a Compare-And-Swap instruction and a Memory Fence instruction.
\end{abstract}


\clearpage

\section{Introduction}
\label{sec:intro}


For some time now, the \ws\ algorithm is one of the most popular for scheduling multithreaded computations.
In \ws, each worker (usually referred to as \textit{processor}) owns a double-ended queue (deque) of threads ready to execute.
This deque is locally manipulated as a stack, similarly to a sequential execution:
processors push and pop threads from the bottom side of their deque when, respectively, a new thread is spawned and the execution of the current thread concludes.
Additionally, whenever a pop operation finds the local deque empty, the processor becomes a \emph{thief} and starts targeting other processors --- called its \emph{victims} --- uniformly at random, with the purpose of stealing a thread from the top of their deques.

As shown by Blumofe~\textit{et al.} in~\cite{DBLP:journals/jacm/BlumofeL99}, \ws\ is provably efficient for scheduling multithreaded computations.
However, due to the concurrent nature of processors' deques, the use of appropriate synchronization mechanisms is required to ensure correctness~\cite{DBLP:conf/popl/AttiyaGHKMV11}.
%
Consequently, even when processors are operating locally on their deques, they incur expensive synchronization overheads that, in most cases, are unnecessary.

The first provably efficient \ws\ algorithm, proposed by Blumofe \textit{et al.}~\cite{DBLP:journals/jacm/BlumofeL99}, assumed that all steal attempts targeting each deque were serialized, and only ensured the success of at most one such attempt per time step.
The idea was materialized in Cilk~\cite{DBLP:journals/jpdc/BlumofeJKLRZ96} via a blocking synchronization protocol named \emph{THE}.
Despite being extremely efficient, Frigo~\textit{et al.} found that the overheads introduced by the \textit{THE} protocol easily account for more than half of {Cilk}'s total execution time~\cite{DBLP:conf/pldi/FrigoLR98}.
Subsequent work mitigated part of these overheads by replacing the \textit{THE} protocol with a non-blocking one that resorts to \compandswap\ (\cas) and \memfence\ (\mfence) instructions~\cite{DBLP:journals/mst/AroraBP01,blumofe1998performance}.
Later, Morrison~\textit{et al.} tuned {Cilk} by removing a single \mfence\ instruction (one that was executed whenever a processor tried to take work from its deque) and found that this single \memfence\ could account for as much as 25\%~of the total execution time~\cite{DBLP:conf/asplos/MorrisonA14}.
Unfortunately, it has been proved, by Attiya~\textit{et al.} in~\cite{DBLP:conf/popl/AttiyaGHKMV11}, that it is impossible to eliminate all synchronization (e.g. the \mfence\ instruction mentioned above) from the implementation of any concurrent data structure that could possibly be used as a work-queue by a \ws\ algorithm, while maintaining correctness.
Indirectly, this result implies the impossibility of eliminating all synchronization from \ws\ algorithms that use any fully concurrent data structure as processor's work-queues.

Various proposals have been made with the goal of eliminating synchronization for local deque accesses, by making deques partly or even entirely private~\cite{DBLP:conf/ppopp/AcarCR13,DBLP:conf/sc/DinanLSKN09,DBLP:conf/ppopp/HiraishiYUY09,DBLP:conf/hpdc/LifflanderKK12,DBLP:conf/asplos/MorrisonA14,tzannes2012enhancing,DBLP:conf/europar/DijkP14}.
The elimination of synchronization for local deque accesses, however, raises a new problem.
Since synchronization is required to guarantee correctness, when a processor $p$ spawns a thread $\Gamma$ and pushes $\Gamma$ (locally) to its work-queue, $\Gamma$ cannot safely be stolen from $p$ by other processors, at least until $p$ issues some synchronization operation~\cite{DBLP:conf/popl/AttiyaGHKMV11}.
So, when should a busy processor use synchronization to permit load balancing?
The subtleness of this crucial question is evidenced by the current state-of-the-art: there is no algorithm that provably avoids most synchronization overheads while maintaining provably good performance.
%
%
On one hand, if a processor exposes work too eagerly, then it still incurs unnecessary synchronization overheads~ \cite{DBLP:conf/ppopp/AcarCR13,DBLP:conf/icpp/DinanKLNS08,DBLP:conf/sc/DinanLSKN09,DBLP:conf/hpdc/LifflanderKK12,tzannes2012enhancing,DBLP:conf/europar/DijkP14}.
On the other hand, if a processor barely exposes any work then load balancing opportunities become limited, thus potentially dropping the asymptotically optimal runtime guarantees of \ws~\cite{DBLP:conf/ppopp/HiraishiYUY09,DBLP:conf/asplos/MorrisonA14,DBLP:conf/europar/DijkP14}.
%
%
To address this problem optimally, our algorithm follows a lazy approach: (1) a processor $p$ only uses synchronization to expose work when a thief directly asks $p$ for work, and (2) $p$ only exposes a single unit of work (i.e. a single thread) for each time it is asked to expose work.
%


\subsection{Related work}
\label{sub:related-work}


Many efforts have been carried out towards reducing and even eliminating the expensive synchronization present in state-of-the-art \ws\ schedulers.

Michael~\textit{et al} proposes a variant of \ws\ for idempotent computations that reduces synchronization overheads by relaxing the semantics of work-queues from the conventional \textit{exactly-once} semantics to \textit{at-least-once} semantics~\cite{DBLP:conf/ppopp/MichaelVS09}.
By using work-queues that satisfy only the weaker semantics, processors no longer have to incur in expensive synchronization overheads when operating locally. 
Unfortunately, this approach (of relaxing the semantics of work-queues to \textit{at-least-once} semantics) not only is inherently limited, as it is only suitable for idempotent computations, but also drops the provably good performance guarantees of \ws.

%
%

Endo~\textit{et al.} was the first using split queues to avoid unnecessary synchronization~\cite{DBLP:conf/sc/EndoTY97}.
In this study, the authors present an implementation of a scalable garbage collector system that, by using clever load balancing techniques and split queues to avoid unnecessary synchronization overheads, achieves high performances even for large scale machines.
%
In~\cite{DBLP:conf/icpp/DinanKLNS08,DBLP:conf/sc/DinanLSKN09}, Dinan~\textit{et al.} studies \ws\ under a distributed environment and proposes the use of \spds\ to avoid synchronization for local deque accesses.
%
Lifflander~\textit{et al.} studies the execution of iterative over-decomposed applications~\cite{DBLP:conf/hpdc/LifflanderKK12} and proposes, among others, a message-based retentive \ws\ algorithm adapted for the execution of iterative workloads on large scale distributed settings.
%
Tzannes~\textit{et al.} proposes a scheduling algorithm where each processor keeps all of its work entirely private, except for the topmost node that is kept stored in a shared cell~\cite{tzannes2012enhancing}.
Since the algorithm always ensures that the topmost node is shared, it does not behave appropriately for computations in which processors frequently access the topmost nodes of their deques.
As mentioned in~\cite{DBLP:conf/ppopp/AcarCR13}, a similar limitation has been identified for the Chase-Lev Deque~\cite{DBLP:conf/spaa/ChaseL05}.
Unfortunately, in all these approaches (i.e. in~\cite{DBLP:conf/icpp/DinanKLNS08,DBLP:conf/sc/DinanLSKN09,DBLP:conf/hpdc/LifflanderKK12, tzannes2012enhancing}), processors expose work too eagerly, always leaving some work exposed for thieves to take.
A consequence of this design choice is that synchronization overheads still scale with the total amount of work.

Hiraishi~\textit{et al.} suggests that processors should behave as in a sequential execution~\cite{DBLP:conf/ppopp/HiraishiYUY09}.
Under their scheme, deques are kept entirely private and processors only permit parallelism when an idle processor requests work.
Upon such request, the busy processor backtracks to the last point where it could have spawned a task, spawns the task, offers it to the requesting processor, and then proceeds with the execution.
Since work requests are rare, the gains of eliminating synchronization for local operation can surpass the extra overheads arising from backtracking.
%

Morrison~\textit{et al.} studies alternative designs to the synchronization protocols used by \ws\ schedulers, considering the architectures of modern \emph{TSO} processors~\cite{DBLP:conf/asplos/MorrisonA14}.
In their algorithm, thieves can only steal work from a victim if such work is stored far enough from the bottom of the victim's deque to avoid any data race; this safe distance is computed \textit{a priori} by taking into account the size of the microprocessor's internal store buffer.
With this strategy, not only thieves can asynchronously take work from their victims, but processors can also access their deques locally without requiring any synchronization.
Unfortunately, such scheme suffers from a big limitation: the bottommost threads within a processor's deque cannot be stolen, and thus the scheduler is not appropriate for generic computations.


%
More recently, Dijk~\textit{et al.} studies the effectiveness of \spds\ on shared memory environments~\cite{DBLP:conf/europar/DijkP14,van2015certainty}.
In their approach, however, busy processors only check for work requests each time they access their deque, which precludes any performance guarantees for generic computations.
This is since the frequency at which busy processors may permit load balancing depends on the structure of the computation.
Moreover, whenever a busy processor realizes it was targeted by a steal attempt, it exposes at least half of its work.
This strategy increases the unnecessary synchronization costs incurred by the algorithm as busy processors now have to start accessing the shared part of their work-queue more often to fetch work. 
%

Acar~\textit{et al.} presents two \ws\ algorithms --- \textit{sender-} and \textit{receiver-initiated} --- that avoid synchronization by making deques entirely private to each processor~\cite{DBLP:conf/ppopp/AcarCR13}.
In addition to promising empirical results, the authors show that the expected execution time for both algorithms can be somewhat competitive with \ws\ algorithms that use concurrent deques.
Unfortunately, for the sender-initiated algorithm, busy processors now have to periodically search for idle ones, leading to unnecessary communication and synchronization overheads that still scale with the computation's execution time, and thus, indirectly, with the total amount of work.
The difference between the receiver-initiated algorithm and ours is more subtle, however.
In their receiver-initiated algorithm, busy processors now have to periodically check for incoming steal requests as well as to expose part of their current state by means of a flag that is periodically updated, thus requiring synchronization. 
This contrasts with our work, where there is no exposed state that processors periodically have to update.
This difference is reflected, for example, in a sequential execution: while our algorithm essentially does not use synchronization, the receiver-initiated algorithm does.

\subsection{Contributions}
\label{sub:contributions}


In this paper we present \cws, a variant of the \ws\ algorithm that uses \spds\ to provably avoid most synchronization overheads, while maintaining an asymptotically optimal expected runtime.
The theoretical significance of our contributions is highlighted, for instance, by the tight bounds we obtain on the synchronization overheads incurred by our algorithm.
Our bounds are essentially independent from the computation's total amount of work, thus contrasting with previous work.
From an algorithm design perspective, \cws\ greatly improves over prior \ws\ schedulers as it shows how to optimally use synchronization to permit provably efficient load balancing.
Four of the distinctive features of our algorithm are:
\begin{enumerate}
\item 
Busy processors only expose work to be stolen after being targeted by one or more steal attempts.
This allows processors to work locally on their work-queue without requiring any synchronization, imposing it only when load balancing may be required.

\item Work exposure requests are attended in constant time.
This is crucial to keep the algorithm's execution time bounds a constant factor away from optimal.
The requirement may be achieved by periodically checking for requests or by implementing an asynchronous notification mechanism.
For the sake of simplicity, we only focus on the former.

\item 
Processors only expose one thread of their local work at a time, contrasting with prior approaches. 
Only so, synchronization for local operation can be eliminated when load balancing is only sporadically required.

\item
All interactions between processors are completely asynchronous, making our algorithm viable for multiprogrammed environments.
  
\end{enumerate}

%
As we will see, our analysis shows that for a $P$-processor execution of a computation with total work $T_{1}$ and critical-path length (\textit{i.e.}~span) $T_{\infty}$, the expected runtime of \cws\ is at most $O\left(\frac{T_{1}}{P} + T_{\infty}\right)$, and the expected synchronization overheads incurred by the algorithm are at most $O\left(\left(C_{\cas} + C_{\mfence}\right)PT_{\infty}\right)$, where $C_{\cas}$ and $C_{\mfence}$ respectively denote the synchronization costs incurred by the execution of a \cas\ and \mfence\ instructions.
These bounds are tight and imply that for several classes of computations our algorithm reduces the use of synchronization by an almost exponential factor when compared with prior provably efficient \ws\ algorithms.

%
%
%

%

%
%
%
%

\subsection{Preliminaries}
\label{sub:preliminaries}


Like in much previous work~\cite{DBLP:journals/mst/AcarBB02,DBLP:conf/ppopp/AcarCR13,DBLP:conf/ipps/AgrawalHHL07,DBLP:journals/tocs/AgrawalLHH08, DBLP:conf/spaa/AroraBP98,DBLP:journals/mst/AroraBP01,DBLP:journals/jacm/BlumofeL99,DBLP:conf/spaa/MullerA16,DBLP:conf/isaac/TchiboukdjianGTRB10}, we model a computation as a \emph{dag} (\textit{i.e.} a direct acyclic graph) $G = \left (V, E \right)$,
where each node $v \in V$ corresponds to an instruction, and each edge $\left(\mu_{1},\mu_{2}\right) \in E$ denotes an ordering between two instructions (meaning $\mu_{2}$ can only be executed after $\mu_{1}$).
Nodes with in-degree of 0 are referred to as \emph{roots}, while nodes with out-degree of 0 are called \emph{sinks}.
Equivalently to Arora \textit{et al} in~\cite{DBLP:journals/mst/AroraBP01}, we make two assumptions related with the structure of computations.
Let $G$ denote a computation's dag:
\begin{enumerate*}
 \item there exists only one root and one sink in $G$;
 \item the out-degree of any node within $G$ is at most two (meaning that each instruction can spawn at most one thread).
\end{enumerate*}

\begin{figure*}
\centering
\begin{subfigure}{0.42\textwidth}
\centering
	\includegraphics[width=.6\linewidth]{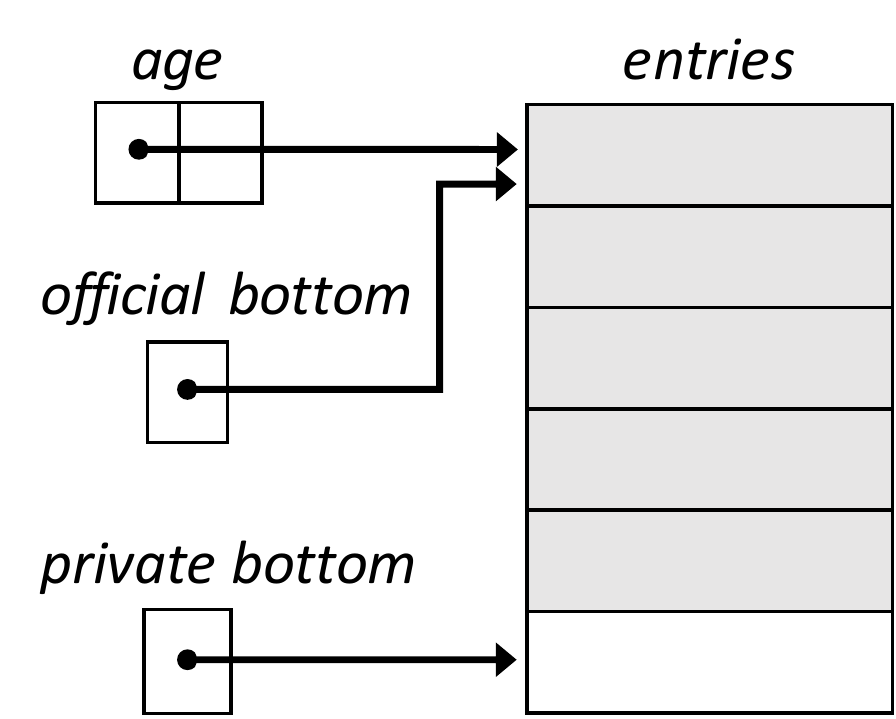}
\caption{An \spd\ with no stealable nodes}
\label{fig:spd-fpriv}
\end{subfigure}
\qquad \qquad \qquad
\begin{subfigure}{0.42\textwidth}
\centering
\includegraphics[width=.6\linewidth]{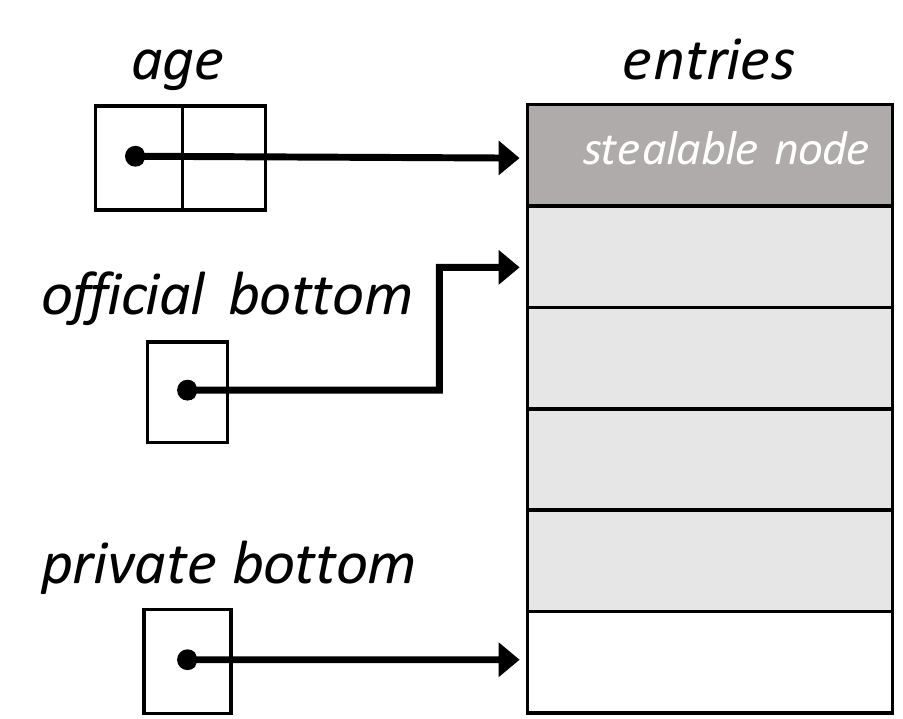}
\caption{An \spd\ with one stealable node}
\label{fig:spd-steal}
\end{subfigure}
\caption{The \spdeque}
\label{fig:spd}
\end{figure*}

The total number of nodes within a dag is expressed by $T_{1}$ and the length of a longest directed path (\textit{i.e.}~the critical-path length) by $T_{\infty}$.
A node is \emph{ready} if all its ancestors have been executed, implying that all the ordering constraints of $E$ are satisfied.
When a node becomes ready we say that it was \emph{enabled}; to ensure correction only ready nodes can be executed.
The assignment of a node $\mu$ to a processor $p$ means that $\mu$ will be the next node $p$ executes.
Finally, a computation's execution can be partitioned into discrete time steps, such that at each step, every processor executes an instruction.

\section{\cws}
\label{sec:algorithm}

In \cws, each processor owns a Lock Free \spd, instead of a typical concurrent deque.
An \spd\ (illustrated in Figure~\ref{fig:spd}) is simply a deque that is split into two parts: a private part and a public part.
The public part lies in the top of the \spd\ whereas the private corresponds to the rest of the \spd.
To avoid synchronization for local operations, only the owner of an \spd\ is allowed to access its private part.
Furthermore, by default busy processors operate on the private part of their \spd, pushing and popping ready nodes as necessary.
In fact, a busy processor only attempts to fetch work from the public part of its \spd\ if the private part is empty.
In such situation, if the processor's attempt succeeds (\textit{i.e.}~if the public part of the processor's \spd\ is not empty), the obtained node becomes the processor's new assigned node.
However, if the public part of the processor's \spd\ is also empty, the processor becomes a thief and begins a stealing phase.
During stealing phases, thieves target victims uniformly at random and attempt to steal work from the top of their \spds.
To keep the private part of \spds\ entirely private, steal attempts are only allowed to access the public part.
Thus, when a thief attempts to steal work from a victim's \spd\ whose public part is empty (illustrated in Figure~\ref{fig:spd-fpriv}), the steal attempt simply fails and the thief does not obtain work.
In such case, the thief then updates a victim's flag (referred to as the $targeted$ flag) to (asynchronously) notify the victim that the public part of its \spd\ is empty (more on this ahead).
When the owner of the \spd\ realizes it was notified (by checking the value of its $targeted$ flag), it tries to transfer a node from the private part of its \spd\ to the public part.
If the private part is not empty then a node is transferred, in which case we say that the transferred node became \emph{stealable} (illustrated in Figure~\ref{fig:spd-steal}).


\subsection{The Lock-Free \spdeque}
\label{sub:spdeque}

We now present the specification of an \spd\ object, along with its associated \spdrs.
Being the behavior of \spds\ similar to the behavior of concurrent deques, the \spd's relaxed semantics are comparable to the relaxed deque semantics presented in~\cite{DBLP:journals/mst/AroraBP01}.
An \spd\ object meeting the \spdrs\ supports five methods:
\begin{description}
  \item \textsl{push} --- Pushes a node into the bottom of the \spd's private part.

  \item \textsl{pop} --- Removes and returns a node from the bottom of the \spd's private part, if that part is not empty.
  Otherwise, returns the special value \textsc{race}.

  \item \textsl{updateBottom} --- Transfers the topmost node from the private part of the \spd\ into the bottom of the public part, and does not return a value.
  The invocation has no effect if the private part of the \spd\ is empty.

  \item \textsl{popBottom} --- Removes and returns the bottom-most node from the public part of the \spd.
  If the \spd\ is empty, the invocation has no effect and \textsc{empty} is returned.

  \item \textsl{popTop} --- Attempts to remove and return the topmost node from the public part of the \spd.
  If the public part is empty, the invocation has no effect and the value \textsc{empty} is returned.
  If the invocation aborts, it has no effect and the value \textsc{abort} is returned.
\end{description}

An \spd\ implementation is constant-time \textit{iff} any invocation to each of these methods takes at most a constant number of steps to return.
Say that a set of invocations to an \spd's methods meets the \spdrs\ \textit{iff} there is a set of \emph{linearization times} for the corresponding non-aborting invocations such that:
\begin{enumerate}
  \item Every non-aborting invocation's linearization time lies within the beginning and completion times of the respective invocation;
  \item No linearization times associated with distinct non-aborting invocations coincide;
  \item The return values for the non-aborting invocations are consistent with a serial execution of the methods in the order given by the linearization times of the corresponding non-aborting invocations; and
  \item For each aborted \textsl{popTop} invocation $x$ to an \spd\ $d$, there exists another invocation removing the topmost item from $d$ whose linearization time falls between the beginning and completion times of invocation $x$.
\end{enumerate}


\subsection{The \cws\ Algorithm}
\label{sub:cws}

Algorithm~\ref{algo:cws} depicts the specification of the \cws\ algorithm.
Each processor owns an \spd\ that uses to store its attached nodes and, additionally, owns a $targeted$ flag that stores a Boolean value.
This flag is used to implement an asynchronous notification mechanism that allows thieves to request their victims to expose work, allowing it to be stolen.
Even though, in practice, the notification mechanism our algorithm can be implemented using signals, to perform a correct analysis of the algorithm's synchronization overheads all the possible sources of such overheads must be explicit, for which reason we chose to embed a simple notification mechanism into the algorithm's specification.
%
%
Although the $targeted$ flag of each processor can be simultaneously accessed by multiple processors, to ensure the algorithm's correctness it suffices to guarantee that no read nor write operation to a processor's $targeted$ flag is cached.

\begin{algorithm}
\caption{The \cws\ algorithm.}
\label{algo:cws}
\vspace{-15pt}
\begin{multicols}{2}
\begin{scriptsize}
   \begin{algorithmic}[1]
    \Procedure{Scheduler}{}
      \While{$computation \ not \ terminated$}
        \If{$self.targeted$}
          \State $	self.spdeque$.updateBottom() 
          \State $self.targeted \gets $ \textsc{false}
        \EndIf
        \If{ValidNode($assigned$)}
          \State $enabled \gets $execute($assigned$)
          \If{length($enabled$) $> 0$}
            \State $assigned \gets enabled\left[0\right]$
            \If{length($enabled$) $= 2$}
              \State $self.spdeque$.push($enabled\left[1\right]$)
            \EndIf
          \Else
            \State $assigned \gets self.spdeque$.pop()
            \If{$assigned = $ \textsc{race}}
              \State $assigned \gets self.spdeque$.popBottom()
            \EndIf
          \EndIf
        \Else
          \State $self$.WorkMigration()
        \EndIf
      \EndWhile
    \EndProcedure
\newline
    \Procedure{WorkMigration}{}
      \State $victim \gets $ UniformlyRandomProcessor()
      \State $assigned \gets victim.spdeque$.popTop()
      \If{$assigned = $ \textsc{empty}}
        \State $victim.targeted \gets $ \textsc{true}
      \EndIf
    \EndProcedure
\newline
    \Function{ValidNode}{$node$}
      \State \Return~$node\,\neq\,$\textsc{empty}
      {$\quad and \quad node\,\neq\,$ \textsc{abort}}
      {$\quad and \quad node\,\neq\,$ \textsc{none}}
    \EndFunction
  \end{algorithmic}
\end{scriptsize}
\end{multicols}
\vspace{-10pt}
\end{algorithm}

Before a computation's execution begins, every processor sets its $assigned$ node to \textsc{none} and its $targeted$ flag to \textsc{false}.
To start the execution, one of the processors gets the root node assigned.

As we will see, the behavior of \cws\ is similar to the original \ws\ algorithm.
Consider some processor $p$ working on a computation scheduled by \cws, and some iteration of the scheduling loop that $p$ executes (corresponding to lines 2 to 23 of Algorithm~\ref{algo:cws}).
First, $p$ reads the value of its $targeted$ flag to check if it has been notified by some thief.
If $p$'s $targeted$ flag is set to \textsc{true} (\textit{i.e.}~if $p$ was notified), the processor tries to make a node stealable, by invoking \textsl{updateBottom} to its \spd.
After that, and regardless of that invocation's outcome, $p$ resets its $targeted$ flag back to \textsc{false}.
The subsequent behavior of $p$ depends on whether it has an assigned node.

\begin{itemize}
  \item If $p$ has an assigned node, $p$ executes the node.
    From this execution, either zero, one or two nodes can be enabled.
      \begin{description}

        \item[Zero nodes enabled] The processor tries to fetch the bottommost node stored in its \spd.
          To that end, $p$ first tries to fetch a node from the bottom of its \spd's private part (line 15).
          If $p$ finds that part empty, it then tries to fetch a node from the public part (line 17).
          If this part is also empty, $p$ becomes a thief and starts a work stealing phase.
          On the other hand, if $p$ successfully fetched a node from any of the parts of its \spd, then the returned node becomes $p$'s new assigned node.

        \item[One node enabled] The enabled node becomes $p$’s new assigned node (line 10).

        \item[Two nodes enabled] One of the enabled nodes becomes $p$'s new assigned node, whilst the other is pushed into the bottom of the private part of $p$'s \spd\ (line 12).
      \end{description}

  \item If $p$ does not have an assigned node, it is searching for work. 
    In this situation, the processor first targets, uniformly at random, a victim processor and then attempts to steal work from the public part of the victim's \spd\ (lines 26 and 27).
    If the attempt is successful, the stolen node becomes $p$'s new assigned node.
    If the attempt aborts, $p$ simply gives up on the steal attempt.
    For last, if $p$ finds the public part of the victim's \spd\ empty it sets the victim's $targeted$ flag to \textsc{true} (line 29), notifying the victim that it found the public part of the victim's \spd\ empty.
\end{itemize}


\subsection{A \spdeque\ Implementation}
\label{sub:spdeque-imp}
Algorithm~\ref{algo:spdeque} depicts a possible implementation of the lock-free \spd, based on the deque's implementation given in~\cite{DBLP:journals/mst/AroraBP01}.
As illustrated in Figure~\ref{fig:spd}, each \spd\ object has four instance variables:
\begin{description}
  \item $entries$ ---  an array of ready nodes.
  \item $privateBottom$ --- the index below the bottommost node of the \spd.
  \item $\textit{officialBottom}$ --- the index below the bottommost node of the \spd's public part.
  \item $age$ ---  composed of two fields:  $top$, which corresponds to the index right below the topmost node of the \spd's public part; and $tag$, which is only used to ensure correction (avoiding the famous \emph{ABA} problem).
\end{description}


\begin{algorithm}[!h]
\caption{The \spd\ implementation}
\label{algo:spdeque}
\begin{scriptsize}
\vspace{-15pt}
\begin{multicols}{2}
   \begin{algorithmic}[1]
\Statex    $privateBottom \gets 0$ // {private field}
\Statex    $entries \gets \{\}$ // private read-write, public read-only
\Statex    ${officialBottom} \gets 0$  // private read-write, public read-only
\Statex   $age \gets \{0, 0\}$  // {public field}
  \newline
   \setcounter{ALG@line}{0}
    \Procedure{push}{$node$}
      \State $pBot \gets self.privateBottom$
      \State $self.entries[pBot] \gets node$
      \State $self.privateBottom \gets pBot + 1$
    \EndProcedure
  \newline
    \Procedure{pop}{}
      \State $pBot \gets self.privateBottom$
      \If{$pBot = self.{officialBottom}$}
         \Return \textsc{race}
      \EndIf
      \State $pBot \gets pBot - 1$
      \State $node \gets self.entries[pBot]$
      \State $self.privateBottom \gets pBot$
      \State \Return $node$
    \EndProcedure
  \newline
    \Procedure{popTop}{}
      \State $oldAge \gets self.age$
      \State $oldBottom \gets self.{officialBottom}$
      \If{$oldBottom \leq oldAge.top$}
         \Return \textsc{empty}
      \EndIf
      \State $node \gets self.entries[oldAge.top]$
      \State $newAge \gets oldAge$
      \State $newAge.top \gets newAge.top + 1$
      \If{CAS($age$, $oldAge$, $newAge$) = \textsc{success}}
        \State \Return $node$
      \EndIf
      \State \Return \textsc{abort}
    \EndProcedure
    \newline
    \Procedure{updateBottom}{}
    \State $pBot \gets self.privateBottom$
    \State $oBot \gets self.{officialBottom}$
    \If{$pBot > oBot$}
       $oBot \gets oBot + 1$
    \EndIf
    \State $self.{officialBottom} \gets oBot$
  \EndProcedure
\newline
      \Procedure{popBottom}{}
        \State $oBot \gets self.{officialBottom}$
        \If{$oBot = 0$}
           \Return \textsc{empty}
        \EndIf
        \State $oBot \gets oBot - 1$
        \State $self.{officialBottom} \gets oBot$
        \State $node \gets self.entries[oBot]$
        \State $oldAge \gets age$
        \If{$oBot > oldAge.top$}
           \Return $node$
        \EndIf
        \State $self.{officialBottom} \gets 0$
        \State $self.privateBottom \gets 0$
        \State $newAge.top \gets 0$
        \State $newAge.tag \gets oldAge.tag + 1$
        \If{$oBot = oldAge.top$}
          \If{CAS($age$, $oldAge$, $newAge$) = \textsc{success}}
            \State \Return $node$
          \EndIf
        \EndIf
        \State $self.age \gets newAge$
        \State \Return \textsc{empty}
      \EndProcedure
  \end{algorithmic}
  \end{multicols}
\end{scriptsize}
\vspace{-10pt}
\end{algorithm}

We say that a set of invocations is \emph{good} if and only if the methods \textsl{push}, \textsl{pop}, \textsl{updateBottom} and \textsl{popBottom} are never invoked concurrently.
For \cws, as only the owner of each \spd\ can invoke these methods, it is easy to deduce that all sets of invocations issued by the algorithm are good.
Furthermore, we claim that the implementation depicted in Algorithm~\ref{algo:spdeque} is constant-time and meets the \spdrs\ (defined in \Cref{sub:spdeque}) on any good set of invocations.
However, even though all methods are composed by a small number of instructions and none includes a loop, proving this claim is not a straightforward task because all possible execution interleaves have to be considered.
Moreover, as the main focus of this study is not related with programs' verification, the proof of this claim falls out of the scope of this paper.
Yet, we remark that the proposed implementation is a simple extension of the deque implementation presented in~\cite{DBLP:journals/mst/AroraBP01}, which has been proven in~\cite{blumofe1999verification} to be a correct implementation, meeting the relaxed deque semantics on any set of invocations made by the \ws\ algorithm.
For this reason, throughout this paper we assume that for any set of invocations issued by the \cws\ algorithm, the \spdrs\ is always satisfied.

\begin{lemma}
  \label{lemma:syncfree-push}
  No invocation to \textsl{push} requires a \mfence\ instruction.
\end{lemma}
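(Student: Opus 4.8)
The plan is to show that \textsl{push} never has to enforce any cross-processor memory-ordering constraint, so that under the relaxed (TSO-style) memory model assumed here no \mfence\ is ever required. Recall that an \mfence\ is needed only to rule out a reordering --- concretely, a store being observed after a later load, or a write failing to become visible before a subsequent dependent operation --- that would otherwise violate the linearizability guaranteed by the \spdrs. Hence the strategy is to inspect the three memory accesses performed by \textsl{push} (the read of $privateBottom$, the write to $entries[pBot]$, and the write to $privateBottom$) and argue that none of them participates in such a constraint.

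First I would observe that both writes target only the \emph{private} region of the \spd. The field $privateBottom$ is, by declaration, accessed solely by the owner, and the slot written by \textsl{push} satisfies $pBot = privateBottom \ge officialBottom$, so it lies strictly below the public part. Since the set of invocations is \emph{good}, the only method that may run concurrently with \textsl{push} is \textsl{popTop}, issued by thieves. A \textsl{popTop} reads $age$, $officialBottom$, and $entries[oldAge.top]$, and it reaches that last read only when $oldAge.top < officialBottom \le pBot$; moreover $age.top$ only ever ranges over the public part, so $oldAge.top < pBot$ regardless of staleness. Thus no concurrent \textsl{popTop} ever reads the slot \textsl{push} writes, nor does it touch $privateBottom$: the locations \textsl{push} modifies are invisible to every concurrent invocation at the time they are written, so \textsl{push} has no genuine data race with any other processor.

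Next I would dispatch the intra-method ordering. The only load in \textsl{push} is the initial read of $privateBottom$, which precedes both stores in program order, so there is no store-then-load pair inside \textsl{push} whose reordering a fence would have to forbid. Because a processor always observes its own writes in program order, the owner's later \textsl{pop} (which reads $privateBottom$) sees the pushed value without any fence. Combining these points, \textsl{push} can be linearized at the instant it updates $privateBottom$, and its stores need not be made globally visible at any particular moment relative to other processors' operations; hence no \mfence\ is required.

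The point I would flag as the main subtlety is the \emph{eventual} exposure of a just-pushed node: a node written by \textsl{push} may later be moved into the public part by \textsl{updateBottom}, which advances $officialBottom$, and at that moment the write to $entries[pBot]$ must already be visible to thieves. I would stress that this is a constraint on \textsl{updateBottom}, not on \textsl{push}: the entry write precedes, in the owner's program order, any subsequent write to $officialBottom$, and since store--store ordering is preserved under TSO, the entry propagates before the $officialBottom$ increment that exposes it. The visibility requirement is therefore discharged without placing any fence inside \textsl{push}, completing the argument.
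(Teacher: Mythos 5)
Your proof is correct and follows essentially the same approach as the paper's: the paper's entire argument is that \textsl{push} performs only a single operation on a single publicly accessible field ($entries$), so no reordering inside the method needs to be forbidden by a fence. Your version is a considerably more rigorous elaboration of that one-liner --- in particular, checking non-interference with concurrent \textsl{popTop} invocations and discharging the deferred-visibility obligation onto \textsl{updateBottom} via store--store ordering --- but it does not diverge from the paper's line of reasoning.
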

\begin{proof} 
  Since the \textsl{push} method operates only once over a single publicly accessible field ($entries$) of the \spd's state, no \mfence\ instructions are required.
\end{proof}

\begin{lemma}
  \label{lemma:syncfree-pop}
  No invocation to \textsl{pop} requires a \mfence\ instruction.
\end{lemma}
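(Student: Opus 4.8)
The plan is to mirror the reasoning used for \textsl{push} in \Cref{lemma:syncfree-push}: enumerate every memory operation that a \textsl{pop} invocation performs, classify the fields it touches, and argue that none of the orderings an \mfence\ could enforce are needed for correctness. First I would read off the accesses from the body of \textsl{pop}: it loads the private field $privateBottom$, loads $\textit{officialBottom}$, and on the \textsc{race} path returns immediately having performed no store at all; on the other path it additionally loads one slot of $entries$ and then stores $privateBottom$. The single fact I want to extract is that the \emph{only} store a \textsl{pop} ever performs is to $privateBottom$, a strictly private field that no other processor reads, and that this store is the last memory operation of the method.

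Next I would invoke the memory model. On a TSO machine the sole reordering an \mfence\ prevents is a buffered store being delayed past a subsequent load to a different address; since \textsl{pop}'s unique store sits after all of its loads in program order, there is simply no store-then-load pattern inside the method, so no \mfence\ is needed to preserve its internal ordering. I would then argue that \textsl{pop} never needs to observe a \emph{fresh} value written by another processor. By the \emph{good}-set assumption, \textsl{pop} is never concurrent with \textsl{push}, \textsl{updateBottom}, or \textsl{popBottom}, so $\textit{officialBottom}$ (written only by the owner) and the private region of $entries$ are not being modified during the call; and $age$ --- the only field a concurrent \textsl{popTop} can write via \cas\ --- is never touched by \textsl{pop}. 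Consequently the values \textsl{pop} reads already reflect the owner's own most recent writes through store-to-load forwarding, with no cross-processor visibility requirement to enforce. Combining the two observations yields that, whether the invocation returns \textsc{race} or a node, no \mfence\ is required.

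The step I expect to be the main obstacle is making the memory-model argument precise: formally tying ``the only store is private and sits last, and no load depends on another processor's write becoming globally visible'' to ``the \spdrs\ is still met without a fence,'' i.e.\ confirming that dropping the fence cannot admit a linearization that a fence would have excluded. I would discharge this by appealing to the TSO reordering rules together with the standing assumption that the implementation meets the \spdrs\ on good sets of invocations, and by emphasizing the structural reason this works --- that decoupling \textsl{pop} from $age$ is exactly what isolates the dangerous store-$\textit{officialBottom}$/load-$age$ pattern entirely into \textsl{popBottom}, leaving the common-case local pop fence-free.
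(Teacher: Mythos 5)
Your proof is correct and follows the same basic strategy as the paper's --- enumerate the memory accesses performed by \textsl{pop}, classify the fields touched, and argue that no ordering an \mfence\ would enforce is actually needed --- but you develop it in noticeably more detail. The paper's proof is two sentences: it observes that \textsl{pop} only \emph{reads} from two publicly accessible fields ($\textit{officialBottom}$ and $entries$) and that a data dependency prevents those reads from being reordered, and stops there; it silently ignores the store to $privateBottom$ on the grounds that the field is private. You instead make the store explicit and discharge it with the TSO-specific observation that the method's unique store sits after all of its loads, so no store-then-load pattern exists for a fence to order, and you additionally invoke the good-set assumption to rule out any cross-processor visibility requirement on the loads (in particular noting that \textsl{pop} never touches $age$, the one field a concurrent \textsl{popTop} can write). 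Your version buys a more self-contained and memory-model-precise argument, at the cost of leaning on TSO assumptions that the paper never states explicitly; the paper's dependency-based phrasing is terser and model-agnostic but leaves the treatment of the $privateBottom$ store implicit. Both are sound, and your closing remark --- that isolating the dangerous store-$\textit{officialBottom}$/load-$age$ pattern into \textsl{popBottom} is precisely what makes the common-case \textsl{pop} fence-free --- is a structural insight the paper does not articulate but which matches its design.
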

\begin{proof} 
  Any invocation to the \textsl{pop} method only reads from two publicly accessible fields of the \spd's state, namely $\textit{officialBottom}$\ (line 8) and $entries$\ (line 12).
  However, due to a data dependency, no re-ordering between these read operations may occur, and so, no \mfence\ instructions are required.
\end{proof}

The dag of a computation is dynamically unfolded during its execution.
If the execution of a node $u$ \emph{enables} another node ${u}'$, then $(u,{u}')$ is an \emph{enabling edge} and refer to node $u$ as the \emph{designated parent} of ${u}'$.
Refer to the tree formed by the enabling edges of a particular execution of a dag by \emph{enabling tree}, and denote the depth of a node $u$ within this tree by $d\left(u\right)$.
Define the weight of $u$ as $w\left(u\right) = T_{\infty} - d\left(u\right)$.
Similarly to~\cite{DBLP:journals/mst/AroraBP01}, our analysis is made in an \textit{a posteriori} fashion, allowing us to refer to the enabling tree generated by a computation's execution.


The following corollary is a direct consequence of the standard properties of deques (a full proof can be found in the appendix (Lemma~\ref{lemma:app-structural lemma})).

\begin{corollary}
\label{corollary:structural corollary}
Let $v_{1},\ldots,v_{k}$ denote the nodes stored in some processor $p$'s \spd, ordered from the bottom of the \spd\ to the top, at some moment during the execution of \cws.
Moreover, let $v_{0}$ denote $p$'s assigned node (if any).
Then, we have $w\left(v_{0}\right) \leq w\left(v_{1}\right) < \ldots < w\left(v_{k-1}\right) < w\left(v_{k}\right)$.
\end{corollary}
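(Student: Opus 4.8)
The plan is to rephrase the claim in terms of depths in the enabling tree and then prove it by induction over the operations that modify $p$'s state. Since $w(u) = T_{\infty} - d(u)$, the asserted chain $w(v_0) \le w(v_1) < \ldots < w(v_{k-1}) < w(v_k)$ is equivalent to $d(v_0) \ge d(v_1) > \ldots > d(v_k)$; that is, the bottommost node is the deepest, depths strictly decrease from bottom to top, and the assigned node is at least as deep as the bottommost one. I would take this depth inequality as the invariant, with the initial empty state (no assigned node, empty \spd) serving as a vacuous base case. The one structural fact I rely on is that whenever executing a node $u$ enables a node $u'$, the edge $(u,u')$ is an enabling edge, so $u$ is the designated parent of $u'$ and hence $d(u') = d(u) + 1$.

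First I would observe that, for the purpose of this invariant, the \spd\ behaves exactly like an ordinary work-stealing deque: reading the stored nodes from bottom to top traverses the private part (which sits at the bottom) followed by the public part, and the split boundary is irrelevant to the ordering $v_1, \ldots, v_k$. In particular, \textsl{updateBottom} merely moves the private/public boundary up by one position and therefore leaves the combined sequence $v_0; v_1, \ldots, v_k$ entirely unchanged, so it trivially preserves the invariant. By the \spdrs\ I can reason about a linearization of all non-aborting invocations, and since the set of invocations issued by \cws\ is good, the only operation a non-owner performs on $p$'s deque is \textsl{popTop}.

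With these reductions the inductive step reduces to a handful of transitions. A \textsl{popTop} (a steal) removes $v_k$, the topmost node, and a strictly decreasing sequence stays strictly decreasing after deleting its last term, so the invariant survives. When $p$ executes its assigned node $v_0$ and zero nodes are enabled, $p$ fetches the bottommost node via \textsl{pop} or \textsl{popBottom}, making $v_1$ the new assigned node and $v_2, \ldots, v_k$ the new deque; here $d(v_1) \ge d(v_2)$ follows from the old $d(v_1) > d(v_2)$. When one node $c$ is enabled it becomes assigned with $d(c) = d(v_0) + 1 > d(v_0) \ge d(v_1)$, so $d(c) \ge d(v_1)$ and the deque is unchanged. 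When two nodes $c_1, c_2$ are enabled, both are children of $v_0$ with $d(c_1) = d(c_2) = d(v_0) + 1$; $c_1$ becomes assigned and $c_2$ is pushed to the bottom, so the new sequence is $c_1; c_2, v_1, \ldots, v_k$ and the required $d(c_1) \ge d(c_2) > d(v_1)$ again follows from $d(v_0) \ge d(v_1)$.

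The main obstacle is not any single case but the bookkeeping needed to make the argument rigorous under concurrency together with the public/private split: one must argue that the owner's local view of its deque and the linearized effect of concurrent \textsl{popTop} steals can be interleaved without violating the invariant. Once one notes that steals only ever strip the top element and that \textsl{updateBottom} is a no-op on the combined order, every remaining case is the classical work-stealing deque argument, which is exactly why the full (and more tedious) accounting is deferred to Lemma~\ref{lemma:app-structural lemma} in the appendix.
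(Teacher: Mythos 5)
Your proof is correct, and it follows the same underlying induction as the paper's (over assigned-node executions, steals, and the observation that \textsl{updateBottom} does not change the combined sequence), but it uses a different invariant. The paper proves the full Structural Lemma of Arora \emph{et al.} adapted to \spds\ (Lemma~\ref{lemma:app-structural lemma}): it tracks the \emph{designated parents} $u_{0},\ldots,u_{k}$ and shows that each $u_{i}$ is an ancestor of $u_{i-1}$ in the enabling tree, properly so for $i\geq 2$; the depth chain of the corollary is then read off from the ancestor chain. You instead take the depth chain $d(v_{0})\geq d(v_{1})>\cdots>d(v_{k})$ itself as the invariant and verify directly that it is self-sustaining under every transition, using only the fact that an enabled child is one level deeper than its designated parent. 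Your case analysis is complete and each inequality checks out (in particular, the weakening from $>$ to $\geq$ at the assigned position after a \textsl{pop}/\textsl{popBottom}, and the recovery of strictness $d(c_{2})=d(v_{0})+1>d(v_{1})$ when two children are enabled). What you lose relative to the paper is the stronger structural fact about the ancestor relationship among designated parents, which is the reusable form of the lemma in the work-stealing literature; what you gain is a shorter, more elementary argument that targets exactly what Corollary~\ref{corollary:structural corollary} asserts. Your treatment of concurrency (appeal to the linearization guaranteed by the \spdrs\ on good sets of invocations, with \textsl{popTop} the only non-owner operation) is at the same level of rigor as the paper's own proof.
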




\section{Analysis}
\label{sec:analysis}
In this section we obtain bounds on the expected execution time of computations using \cws, and on the expected synchronization overheads incurred by the scheduler.
The analysis we make follows the same overall idea as the one given in~\cite{DBLP:journals/mst/AroraBP01}.
Due to space restrictions, it is not possible to include all the proofs in the paper (which are thus presented in the appendix).
Before advancing any further, we introduce a few more essential definitions. 

Define a \emph{scheduling iteration} as a sequence of instructions executed by a processor corresponding to a particular iteration of the scheduling loop (lines 2 to 23 of Algorithm~\ref{algo:cws}).
Thus, the full sequence of instructions executed by each processor during a computation's execution can be partitioned into scheduling iterations.
As in~\cite{DBLP:journals/mst/AroraBP01}, we introduce the concept of \emph{milestone}:
an instruction within the sequence executed by a processor is a milestone \textit{iff} it corresponds to a node's execution (line 8) or to the return of a call to \textsl{WorkMigration} (line 31).
Taking into account the definition of a scheduling iteration it is clear that any scheduling iteration of the algorithm includes a milestone.
Refer to iterations whose milestone corresponds to a node's execution as \emph{busy iterations}, and refer to the remainder as \emph{idle iterations}.
As one might note, if a processor has an assigned node at the beginning of an iteration's execution, the iteration is a busy one, and, otherwise, the iteration is an idle one.
%
%
By observing the scheduling loop (lines 2 to 23 of Algorithm~\ref{algo:cws}), and taking into account that the \spd's implementation is constant time, it is clear that any scheduling iteration is composed of a constant number of instructions.
It then follows that any processor executes at most a constant number of instructions between two consecutive milestones.
Throughout the analysis, let $C$ denote a constant that is large enough to guarantee that any sequence of instructions executed by a processor with length at least $C$ includes a milestone.


We can now bound the execution time of a computation depending on the number of idle iterations that take place during that computation's execution.
The proof of the following result can be found in the appendix (Lemma~\ref{lemma:app-runtime-by-iterations}), and is a trivial variant of~\cite[Lemma 5]{DBLP:journals/mst/AroraBP01}, but considering the \cws\ algorithm.

\begin{lemma}
\label{lemma:runtime-by-iterations}
Consider any computation with work $T_{1}$ being executed by $P$ processors, under \cws.
The execution time is $O\left(\frac{T_{1}}{P} + \frac{I}{P}\right)$, where $I$ denotes the number of idle iterations executed by processors.
\end{lemma}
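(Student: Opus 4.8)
The plan is to bound the execution time $T$ by a global counting argument on instructions. In the computation model adopted here, at every time step each of the $P$ processors executes exactly one instruction, so the total number of instructions executed over the whole run equals $PT$. Consequently, to bound $T$ it suffices to upper bound the total number of instructions by $O\!\left(T_{1} + I + P\right)$ and divide by $P$. The whole proof therefore reduces to counting instructions, and the tools to do so (milestones, and the constant $C$) have already been set up in the preceding paragraphs.

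First I would count milestones. By the definition of a scheduling iteration, every iteration contains exactly one milestone, so the total number of milestones executed by all processors equals the total number of scheduling iterations. The busy iterations are precisely those whose milestone is a node execution (line 8), and since each of the $T_{1}$ nodes of the dag is executed exactly once, there are exactly $T_{1}$ busy iterations. By definition there are $I$ idle iterations. Hence the total number of milestones (equivalently, of iterations) is $T_{1} + I$.

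Next I would convert the milestone count into an instruction count using the constant $C$ fixed earlier, which guarantees that any run of at least $C$ consecutive instructions by a single processor contains a milestone. Fixing a processor that performs $m$ milestones and cutting its instruction stream into consecutive blocks of length $C$, each full block of length $C$ contains at least one milestone, so there are at most $m$ full blocks; together with the final partial block, whose length is less than $C$, this shows the processor executes fewer than $C(m+1)$ instructions. Summing over all $P$ processors, whose milestone counts add up to $T_{1} + I$, the total number of instructions is less than $C\!\left(T_{1} + I + P\right)$. Combining this with the identity that the total instruction count equals $PT$ yields $PT < C\!\left(T_{1} + I + P\right)$, that is $T < C\!\left(T_{1} + I\right)/P + C = O\!\left(\frac{T_{1}}{P} + \frac{I}{P}\right)$.

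I do not expect a genuine obstacle here, as the argument is essentially bookkeeping and is a direct transcription of the corresponding step in~\cite{DBLP:journals/mst/AroraBP01} to the \cws\ scheduling loop. The only points that require care are the exact bijection between busy iterations and node executions (which relies on each node being executed exactly once) and the per-processor instruction-to-milestone conversion via $C$, both of which follow immediately from the definitions already fixed. The residual additive $O(1)$ coming from the $CP/P$ term and from the partial blocks is harmless and is absorbed into the asymptotic bound.
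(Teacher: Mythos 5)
Your proof is correct and follows essentially the same route as the paper's: the paper phrases it as a token-collection argument (each iteration deposits a token in a \emph{busy} or \emph{idle} bucket, and every $C$ consecutive steps yield at least $P$ tokens), which is the same bookkeeping as your per-processor block-of-$C$ instruction count, just dualized from counting instructions to counting steps. Both hinge on the identical facts that busy iterations biject with the $T_{1}$ node executions, that there are $I$ idle iterations, and that any $C$ consecutive instructions contain a milestone.
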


As we will see, the following two results are key, as they show that the synchronization overheads incurred by \cws\ (essentially) only depend on the number of idle iterations that take place during a computation's execution (proofs in appendix (Lemmas~\ref{lemma:app-syncfree-iteration} and~\ref{lemma:app-cas-and-mem-barriers-by-throws}, respectively)).

\begin{lemma}
\label{lemma:syncfree-iteration}
Consider a processor $p$ executing a busy iteration such that $p$'s $targeted$ flag is set to \textsc{false} when $p$ checks it at the beginning of the iteration.
If the execution of $p$'s assigned node enables one or more nodes, or, if the private part of $p$'s \spd\ is not empty, then, no \mfence\ instruction is required during the execution of the iteration.
\end{lemma}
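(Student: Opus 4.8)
The plan is to walk through a single iteration of the scheduling loop (lines~2--23 of Algorithm~\ref{algo:cws}) under the two hypotheses and argue that the only sources of an \mfence\ inside the loop body are invocations of \spd\ methods, and that under these hypotheses the only such invocations are to \textsl{push} and \textsl{pop}, both already shown to be fence-free by \Cref{lemma:syncfree-push,lemma:syncfree-pop}.

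First I would record the consequences of the two preconditions. Because $p$'s $targeted$ flag reads \textsc{false} at line~3, the guarded block at lines~4--5 is skipped, so \textsl{updateBottom} is never invoked; the flag read itself is an uncached memory access and needs no \mfence. Because the iteration is busy, the \textsc{ValidNode} test at line~6 succeeds, so $p$ executes its assigned node at line~8 --- an ordinary computation instruction, not a scheduler synchronization primitive --- and the branch calling \textsl{WorkMigration} (and hence \textsl{popTop}) is not taken. The remaining statements in the loop are local variable and array assignments, which carry no fence either.

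Next I would branch on the disjunction in the hypothesis. In the first case the execution enables at least one node, so the test at line~9 succeeds: $p$ only reassigns $assigned$ (line~10) and, when exactly two nodes are enabled, calls \textsl{push} (line~12); the \textsl{pop}/\textsl{popBottom} block at lines~15--17 is unreached, and \textsl{push} is fence-free by \Cref{lemma:syncfree-push}. In the second case zero nodes are enabled, so $p$ takes the \textbf{else} branch and calls \textsl{pop} at line~15, which is fence-free by \Cref{lemma:syncfree-pop}.

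The step I expect to be the crux is ruling out the follow-up call to \textsl{popBottom} at line~17 in this second case, since \textsl{popBottom} carries a \cas\ and is precisely the kind of method that could force a fence. For this I would tie the abstract hypothesis ``the private part of $p$'s \spd\ is non-empty'' to the concrete guard inside \textsl{pop} in Algorithm~\ref{algo:spdeque}: the private part consists of the entries at indices $\textit{officialBottom}$ through $privateBottom-1$, so non-emptiness is exactly $privateBottom > \textit{officialBottom}$. Under this inequality the equality test $pBot = \textit{officialBottom}$ in \textsl{pop} (where $pBot$ holds $privateBottom$) fails, so \textsl{pop} returns a genuine node rather than \textsc{race}; hence the \textsc{race} test at line~16 is false and \textsl{popBottom} is not invoked. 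Combining the two cases, the only \spd\ methods executed during the iteration are \textsl{push} or \textsl{pop}, and no \mfence\ is required.
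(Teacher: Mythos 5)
Your proposal is correct and follows essentially the same route as the paper's proof: walk through the iteration, observe that the \textsl{updateBottom} and \textsl{WorkMigration} branches are skipped under the hypotheses, reduce everything to \textsl{push} and \textsl{pop} via Lemmas~\ref{lemma:syncfree-push} and~\ref{lemma:syncfree-pop}, and use the non-emptiness of the private part to show \textsl{pop} returns a node so that \textsl{popBottom} (and its \cas) is never reached. The only difference is one of emphasis: the paper justifies the absence of fences between the scheduler's own instructions by explicitly tracing control and data dependencies that forbid reordering, whereas you assert that the non-method statements are fence-free and instead spend your effort grounding the ``\textsl{pop} does not return \textsc{race}'' step in the concrete guard of Algorithm~\ref{algo:spdeque} --- a step the paper dismisses as trivial.
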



\begin{lemma}
\label{lemma:cas-and-mem-barriers-by-throws}
Consider any computation being executed by the \cws\ algorithm, using $P$ processors.
The number of \cas\ and \mfence\ instructions executed by processors during the computation's execution is at most $O\left(I + P\right)$, where $I$ denotes the total number of idle iterations executed by processors.
\end{lemma}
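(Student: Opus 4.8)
The plan is to enumerate every instruction in \Cref{algo:spdeque} that can be a \cas\ or an \mfence, charge each occurrence to a single method invocation, and then bound the number of invocations of each relevant method by $O(I+P)$. To that end I would first catalogue the sources of synchronization. Among the five \spd\ methods, \textsl{push} and \textsl{pop} contribute no \mfence\ instructions by \Cref{lemma:syncfree-push} and \Cref{lemma:syncfree-pop}, and by inspection neither contains a \cas. Thus the only methods that can ever execute a \cas\ or an \mfence\ are \textsl{popTop}, \textsl{updateBottom} and \textsl{popBottom}; since the implementation is constant-time, each invocation of these executes only a constant number of \cas\ and \mfence\ instructions. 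It therefore suffices to prove that the combined number of invocations of these three methods is $O(I+P)$.

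Next I would bound the three counts. The method \textsl{popTop} is called only from \textsl{WorkMigration} (line 27), which in turn is executed exactly once per idle iteration, so there are at most $I$ invocations of \textsl{popTop}. For \textsl{updateBottom} (line 4) I would argue through the $targeted$ flag: a processor invokes \textsl{updateBottom} only upon reading $targeted = \textsc{true}$, immediately after which it resets the flag to \textsc{false} (line 5). Hence each such \textsc{true}-read must be preceded, since the previous reset, by at least one write that sets the flag to \textsc{true}; and the flag is set to \textsc{true} only inside \textsl{WorkMigration} (line 29), i.e.\ during idle iterations. Charging each \textsc{true}-read to a distinct such write shows that the total number of \textsl{updateBottom} invocations is at most $I$.

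The delicate case is \textsl{popBottom}, which I would bound by a conservation argument together with a boundary term. A node can be returned by \textsl{popBottom} only after it has entered the public part of an \spd, and the only operation that moves a node into the public part is \textsl{updateBottom} (it is the sole operation that increases $\textit{officialBottom}$; this rests on the deque structure underlying \Cref{corollary:structural corollary}). As each \textsl{updateBottom} transfers at most one node and there are at most $I$ of them, at most $I$ invocations of \textsl{popBottom} can return a node. Every other invocation returns \textsc{empty}, which sets the caller's $assigned$ value to \textsc{empty} (line 17); consequently the next scheduling iteration executed by that processor, if any, is idle. This gives an injection from the \textsc{empty}-returning \textsl{popBottom} invocations into the idle iterations, with at most one exception per processor --- the final iteration after which it finds the computation terminated and halts --- contributing the additive $O(P)$. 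Hence \textsl{popBottom} is invoked at most $2I + P$ times.

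Summing the three bounds and multiplying by the constant per-invocation cost yields a total of $O(I+P)$ \cas\ and \mfence\ instructions, as claimed. The main obstacle I anticipate is the \textsl{popBottom} bound: making the node-conservation invariant precise (that every public-part node originates from an \textsl{updateBottom}, which relies on the \spdrs\ and the structure captured by \Cref{corollary:structural corollary}) and carefully justifying the per-processor boundary case that produces the additive $P$ term.
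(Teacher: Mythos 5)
Your counting of method invocations is sound and, in its structure, essentially reproduces the paper's argument: \textsl{popTop} is charged to idle iterations, \textsl{updateBottom} to the thief writes that set the $targeted$ flag, node-returning \textsl{popBottom}s to prior \textsl{updateBottom}s, and \textsc{empty}-returning \textsl{popBottom}s to the following idle iteration with a per-processor boundary term of $O(P)$. That part would go through.

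The gap is in your opening reduction, where you assert that the only possible sources of \cas\ and \mfence\ instructions are the three \spd\ methods \textsl{popTop}, \textsl{updateBottom} and \textsl{popBottom}. This ignores the scheduler loop's own accesses to shared state, most importantly the read of $self.targeted$ at line 3 of Algorithm~\ref{algo:cws}, which occurs in \emph{every} scheduling iteration (as well as the reset at line 5 and the thief's write at line 29). The $targeted$ flag is concurrently written by thieves, so one must argue that its per-iteration read does not itself force a fence; otherwise you get one \mfence\ per busy iteration, i.e.\ $\Theta(T_{1})$ of them, and the whole point of the lemma collapses. The paper closes exactly this hole with Lemma~\ref{lemma:syncfree-iteration}, whose proof is almost entirely about the scheduler code (control and data dependencies at lines 3, 7, 15--17, etc.) rather than the deque methods, and the paper's proof of the present lemma is correspondingly organized around \emph{which scheduling iterations can contain an \mfence} (bounding those by $O(I+P)$ and multiplying by the constant $C$), not around which \spd\ methods are invoked. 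Your proof needs to either invoke Lemma~\ref{lemma:syncfree-iteration} or supply its content: without an argument that the common-case busy iteration (flag \textsc{false}, work enabled or private part nonempty) is fence-free end to end, the reduction to counting three method invocations is unjustified.
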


\subsection{Bounds on the expected number of idle iterations}
\label{sub:bounds-expected-idle-iterations}

The rest of the analysis focuses on bounding the number of idle iterations that take place during a computation's execution, and follows the same general arguments as the analysis presented in~\cite{DBLP:journals/mst/AroraBP01}.


We say that a node $u$ is \emph{stealable} if $u$ is stored in the public part of some processor's \spd.
Furthermore, we denote the set of ready nodes at some step $i$ by $R_{i}$.
Consider any node $u \in R_{i}$.
The potential associated with $u$ at step $i$ is denoted by $\phi_{i}\left(u\right)$ and is defined as
\[\phi_{i}\left(u\right) =
\begin{cases}
4^{3w\left(u\right) - 2} & \text{if\,}u\text{\,is\,assigned}\\
4^{3w\left(u\right) - 1} & \text{if\,}u\text{\,is\,stealable}\\
4^{3w\left(u\right)} & \text{otherwise}
\end{cases}\]

The total potential at step $i$, denoted by $\Phi_{i}$, corresponds to the sum of potentials of all the nodes that are ready at that step: $\Phi_{i} = \sum_{u \in R_{i}}\phi_{i}\left(u\right)$.

The following lemma is a formalization of the arguments already given in~\cite{DBLP:journals/mst/AroraBP01}, but considering the potential function we present (proof in appendix (Lemma~\ref{lemma:app-potential properties})).

\begin{lemma}
\label{lemma:potential properties}
Consider some node $u$, ready at step $i$ during the execution of a computation.
\begin{enumerate}
  \item If $u$ gets assigned to a processor at that step, the potential drops by at least $\frac{3}{4}\phi_{i}\left(u\right)$.
  \item If $u$ becomes stealable at that step, the potential drops by at least $\frac{3}{4}\phi_{i}\left(u\right)$.
  \item If $u$ was already assigned to a processor and gets executed at that step $i$, the potential drops by at least $\frac{47}{64}\phi_{i}\left(u\right)$.
\end{enumerate}
\end{lemma}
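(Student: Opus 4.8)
The plan is to treat each part as a local bookkeeping argument. Assigning a node, making it stealable, or executing it alters only the contributions of $u$ (and, in part 3, of the nodes $u$ enables) to $\Phi_i$, and every such change is a decrease; hence the total drop $\Phi_i - \Phi_{i+1}$ equals the net change in these few contributions. I would compute that net change exactly in each relevant transition and then identify the worst case, which is what forces the stated constants. Throughout I would use that $\phi$ depends only on $w(u)$ and on the state label (assigned / stealable / otherwise), carrying potentials $4^{3w(u)-2}$, $4^{3w(u)-1}$, $4^{3w(u)}$ respectively, i.e.\ successive powers of $4$.

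For part 1, I would enumerate how a node already ready at step $i$ can become assigned: it is popped from the private part (so it was in the \emph{otherwise} state), or it is removed from the public part by \textsl{popBottom} or stolen by \textsl{popTop} (so it was \emph{stealable}). In the first case the drop is $4^{3w(u)} - 4^{3w(u)-2} = \frac{15}{16}\phi_i(u)$, and in the second it is $4^{3w(u)-1} - 4^{3w(u)-2} = \frac{3}{4}\phi_i(u)$. The stealable-to-assigned transition is the worst case and yields exactly the claimed $\frac{3}{4}$. Part 2 is the single transition \emph{otherwise}$\to$\emph{stealable} caused by \textsl{updateBottom} (the node moved from the private part to the public part is not the assigned node, which lives outside the deque), and the drop is $4^{3w(u)} - 4^{3w(u)-1} = \frac{3}{4}\phi_i(u)$, giving the bound with equality.

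For part 3 the key observation is the enabling-tree relationship: any node $x$ that $u$ enables has $u$ as its designated parent, so $d(x) = d(u)+1$ and hence $w(x) = w(u)-1$; each such child therefore has potential a factor $4^{3}=64$ smaller than it would at $u$'s depth. I would then split on the number of enabled children, reading their resulting states off Algorithm~\ref{algo:cws}. With zero children the whole of $\phi_i(u) = 4^{3w(u)-2}$ is removed. With one child (which becomes assigned, line 10) the net drop is $4^{3w(u)-2} - 4^{3w(u)-5} = \frac{63}{64}\phi_i(u)$. With two children, one assigned and one pushed into the private part (the \emph{otherwise} state, line 12), the net drop is $4^{3w(u)-2} - 4^{3w(u)-5} - 4^{3w(u)-3} = \frac{64 - 1 - 16}{64}\phi_i(u) = \frac{47}{64}\phi_i(u)$. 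The two-children case is the worst and pins down the constant $\frac{47}{64}$.

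I expect the only real obstacle to be the bookkeeping rather than any inequality: one must correctly read off which state each node lands in after each transition --- in particular that an enabled but not-assigned child is pushed to the \emph{private} part and so counts as \emph{otherwise}, not \emph{stealable} --- and invoke the weight relation $w(x) = w(u)-1$ for enabled children. Given those facts, each bound reduces to comparing consecutive powers of $4$, and the stated fractions $\frac{3}{4}$, $\frac{3}{4}$, $\frac{47}{64}$ are exactly the minima over the enumerated cases.
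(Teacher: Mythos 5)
Your proposal is correct and follows essentially the same route as the paper's proof: a case analysis on the state transition of $u$ (and, for part~3, of its enabled children with $w(x)=w(u)-1$), reducing each bound to a difference of consecutive powers of $4$ and taking the worst case. The constants $\frac{3}{4}$, $\frac{3}{4}$, $\frac{15}{16}$, $\frac{63}{64}$, and $\frac{47}{64}$ all match the paper's computations.
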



For the remainder of the analysis, we make use of a few more definitions, first introduced in~\cite{DBLP:journals/mst/AroraBP01}.
We denote the set of ready nodes attached to some processor $p$\ (\textit{i.e.} the ready nodes in $p$'s \spd\ together with the node it has assigned, if any) at the beginning of some step $i$ by $R_{i}\left(p\right)$.
Furthermore, we define the total potential associated with $p$ at step $i$ as the sum of the potentials of each of the nodes that is attached to $p$ at the beginning of that step
$\Phi_{i}\left(p\right) = \sum_{u \in R_{i}\left(p\right)}\phi_{i}\left(u\right)$.

For each step $i$, we partition the processors into two sets, $D_{i}$ and $A_{i}$, where the first is the set of all processors whose \spd\ is not empty at the beginning of step $i$ while the second is the set of all other processors (\textit{i.e.} the set of all processors whose \spd\ is empty at the beginning of that step).
Thus, the potential of any step $i$, $\Phi_{i}$, is composed by the potential associated with each of these two partitions
$\Phi_{i}= \Phi_{i}\left(D_{i}\right) + \Phi_{i}\left(A_{i}\right)$,
where
$\Phi_{i}\left(D_{i}\right) = \sum_{p \in D_{i}}\Phi_{i}\left(p\right)$ and $\Phi_{i}\left(A_{i}\right) = \sum_{p \in A_{i}}\Phi_{i}\left(p\right)$.


The following lemma is a direct consequence of Corollary~\ref{corollary:structural corollary} and of the potential function's properties (proof in appendix (Lemma~\ref{lemma:app-top-heavy spdeques})).

\begin{lemma}
\label{lemma:top-heavy spdeques}
Consider any step $i$ and any processor $p \in D_{i}$.
The top-most node $u$ in $p$'s \spd\ contributes at least $\frac{4}{5}$ of the potential associated with $p$.
That is, we have $\phi_{i}\left(u\right) \geq \frac{4}{5}\Phi_{i}\left(p\right)$.
\end{lemma}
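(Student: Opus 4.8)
The plan is to reduce the whole statement to the single inequality
$\sum_{v}\phi_i(v) \le \tfrac14\,\phi_i(u)$, where the sum ranges over all nodes attached to $p$ other than the top-most node $u$. Since $\Phi_i(p) = \phi_i(u) + \sum_{v}\phi_i(v)$, this inequality is equivalent to $\Phi_i(p) \le \tfrac54\,\phi_i(u)$, which is exactly the claim. The argument then rests on two facts: the potential grows by a factor $4^3 = 64$ with each unit increase of the integer weight $w$, and, by Corollary~\ref{corollary:structural corollary}, the weights of the nodes attached to $p$ are strictly increasing from the bottom of the \spd\ to the top, so that $u$ carries the unique largest weight.

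First I would fix notation as in Corollary~\ref{corollary:structural corollary}: let $v_1,\dots,v_k$ be the \spd\ contents from bottom to top (so $u = v_k$), let $v_0$ be $p$'s assigned node when it exists, and set $W = w(u)$. Because weights are integer-valued and $w(v_0) \le w(v_1) < \cdots < w(v_k) = W$, the strict chain forces $w(v_j) \le W - (k-j)$ for $1 \le j \le k$ and $w(v_0) \le W - (k-1)$. I would then record two per-node estimates: the coarse bound $\phi_i(v) \le 4^{3w(v)}$, valid for every node since all three coefficients $4^{-2},4^{-1},4^{0}$ are at most $1$; and, for the top-most node, the lower bound $\phi_i(u) \ge 4^{3W-1}$, which holds because $u$ is either stealable (potential $4^{3W-1}$) or private (the even larger $4^{3W}$) and cannot be the assigned node.

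Next I would bound the ``other'' potential by a geometric sum. The interior nodes $v_1,\dots,v_{k-1}$ contribute at most $4^{3W}\sum_{\ell=1}^{k-1}64^{-\ell} < 4^{3W}/63$, using the weight gaps above, while the assigned node, keeping its sharper coefficient $4^{-2}$, contributes at most $4^{3w(v_0)-2} \le 4^{3W}\cdot 4^{-2}\cdot 64^{-(k-1)}$. Adding these gives $\sum_{v}\phi_i(v) \le 4^{3W}\bigl(\tfrac{1}{16}64^{-(k-1)} + \sum_{\ell=1}^{k-1}64^{-\ell}\bigr)$, and a short check shows the bracketed quantity is maximized at $k=1$, where it equals $\tfrac1{16}$. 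Hence $\sum_{v}\phi_i(v) \le 4^{3W-2} \le \tfrac14\,\phi_i(u)$, which closes the reduction.

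The delicate point — and essentially the only place the constants are tight — is the assigned node $v_0$, whose weight is only weakly below that of $v_1$. When $k=1$ there are no interior nodes and $v_0$ may tie $u$ in weight, so its contribution reaches exactly $4^{3W-2} = \tfrac14\,\phi_i(u)$; this equality case is precisely what pins the constant in the statement at $\tfrac45$ rather than anything larger. I would therefore treat $k=1$ as the tight base case and, for $k \ge 2$, verify that the extra factor $64^{-(k-1)}$ on the $v_0$ term together with the geometric tail stays safely below $\tfrac1{16}$ (the difference between consecutive values of the bracketed expression is negative, so it is strictly decreasing in $k$).
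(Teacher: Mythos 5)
Your proof is correct. It rests on exactly the same two ingredients as the paper's: the strictly increasing weight chain from Corollary~\ref{corollary:structural corollary} and the factor-$64$ decay of the potential per unit of weight. The mechanics differ, though: the paper proves the bound $\Phi_i(p) \leq \frac{5}{4}\phi_i(u)$ by induction on the number of nodes in the \spd\ (base case handles the single-node-plus-assigned-node configuration, where the constant is pinned; the inductive step shows the ratio actually improves to $\frac{261}{256}$ once a second deque node is present), whereas you collapse the whole argument into one closed-form geometric sum and verify by a monotonicity check that $k=1$ is the extremal case. Your version buys two things: it makes the tightness of the constant $\frac{4}{5}$ completely explicit (the assigned node tying the top node in weight when $k=1$), and it is more careful than the paper about which coefficient $4^{0}$, $4^{-1}$, $4^{-2}$ applies to which node --- you uniformly upper-bound every non-top node by $4^{3w(v)}$ and lower-bound the top node by $4^{3W-1}$, while the paper's induction step silently treats both the top and second-from-top nodes as non-stealable and its base case silently treats the single node as stealable. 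The paper's induction is slightly shorter to state but leans on the reader to check those coefficient cases; both arguments are sound. One small point worth making explicit in a final write-up: when $p$ has no assigned node the $v_0$ term is simply absent from your sum, so the bound only gets better.
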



With this, we now show that if a processor $p$ is targeted by a steal attempt, then $p$'s potential decreases by a constant factor (proof in appendix (Lemma~\ref{lemma:app-potential-decrease})).

\begin{lemma}
\label{lemma:potential-decrease}
Suppose a thief processor $p$ chooses a processor $q \in D_{i}$ as its victim at some step $j$, such that $j \geq i$\ (\textit{i.e.}~a steal attempt of $p$ targeting $q$ occurs at step $j$).
Then, at step $j + 2C$, the potential decreased by at least $\frac{3}{5}\Phi_{i}\left(q\right)$ due to either the assignment of the topmost node in $q$'s \spd, or for making the topmost node of $q$'s \spd\ become stealable.
\end{lemma}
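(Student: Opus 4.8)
The plan is to track the single node $u$ that is top-most in $q$'s \spd\ at step $i$, and to show that before step $j+2C$ it is either assigned to some processor or, if it was private at step $i$, transferred into the public part and made stealable. \Cref{lemma:top-heavy spdeques} gives $\phi_i(u)\ge\frac45\Phi_i(q)$. The weight $w(u)$ is fixed throughout the execution, so if I always appeal to the \emph{first} of these transitions to occur after step $i$, then $u$'s status just before it coincides with its status at step $i$ and its pre-transition potential is exactly $\phi_i(u)$; \Cref{lemma:potential properties} (parts 1 and 2) then says this transition drops the total potential by at least $\frac34\phi_i(u)$. Because the total potential is non-increasing, exhibiting the transition before step $j+2C$ yields $\Phi_i-\Phi_{j+2C}\ge\frac34\phi_i(u)\ge\frac34\cdot\frac45\Phi_i(q)=\frac35\Phi_i(q)$, as required.

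First I would record a structural fact: \textsl{push} only inserts at the bottom of the private part and every removal is from an extreme, so no node is ever placed above $u$; hence $u$ remains top-most until it is removed, and a private node can be top-most only when the public part is empty. If $u$ is stealable at step $i$, it is the top of a non-empty public part and stays so until removed. Should it be removed by step $j$, that removal is an assignment (to a thief via \textsl{popTop}, or to $q$ via \textsl{popBottom} when $u$ is the last public node) and we are done. Otherwise $u$ is still the top-most public node when $p$'s \textsl{popTop} executes at step $j$; appealing to the \spdrs\ rather than to the code, whether that invocation returns $u$, returns \textsc{empty}, or aborts, consistency together with the abort clause forces the top-most item $u$ to be removed --- hence assigned --- by an invocation completing within $p$'s iteration, i.e.\ by step $j+C$.

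The delicate case is $u$ private at step $i$, so the public part is then empty. I would argue by contradiction that $u$ cannot stay private and unassigned throughout $[i,j+2C]$: if it did, the public part would be empty over the whole interval (as $u$ stays top-most), so $p$'s \textsl{popTop} at step $j$ returns \textsc{empty} and sets $q.targeted$ to \textsc{true}, a write that completes before \textsl{WorkMigration} returns and hence by step $j+C$. At the start of its next scheduling iteration $q$ reads the flag, finds it \textsc{true}, and calls \textsl{updateBottom} on a non-empty private part, transferring $u$ into the public part and making it stealable --- contradicting the assumption. This is where the $2C$ comes from: the flag is set within $p$'s iteration ($\le C$ steps), and $q$ completes at most one further iteration before reaching its next flag check ($\le C$ more steps), so $q$ reacts by step $j+2C$.

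The main obstacle I anticipate is the concurrency bookkeeping around step $j$: discharging, purely through the \spdrs, every way a \textsl{popTop} might fail to take $u$ --- a spurious \textsc{empty}, an abort provoked by a concurrent \textsl{popBottom} emptying the public part, or a competing thief winning the \cas\ --- and, in each, certifying that $u$ is nonetheless assigned; and making the $2C$ count fully rigorous by pinning down when the $targeted$ write becomes visible to $q$ and when $q$'s next line-3 check falls relative to its current milestone. The auxiliary monotonicity of the total potential, which I use to pass from the single qualifying step to a bound on $\Phi_i-\Phi_{j+2C}$, follows from the properties of the potential function established in \Cref{lemma:potential properties}.
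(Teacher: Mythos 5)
Your proposal is correct and follows essentially the same route as the paper's proof: track the topmost node $u$ of $q$'s \spdeque\ at step $i$, show via the \spdrs\ and the $targeted$-flag/\textsl{updateBottom} mechanism that $u$ is either assigned or made stealable by step $j+2C$, and then combine Lemma~\ref{lemma:top-heavy spdeques} with Lemma~\ref{lemma:potential properties} to get the drop of $\frac{3}{4}\cdot\frac{4}{5}\Phi_{i}(q)=\frac{3}{5}\Phi_{i}(q)$. The only cosmetic difference is that you organize the cases by $u$'s status at step $i$ (stealable versus private) while the paper cases on the outcome of the \textsl{popTop} invocation (node returned, abort, \textsc{empty}); the substance, including the $2C$ timing accounting, is the same.
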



The next lemma is a trivial generalization of the original result presented in~\cite[Balls and Weighted Bins]{DBLP:journals/mst/AroraBP01}
(proof in appendix (Lemma~\ref{lemma:app-balls and weighted bins})).

\begin{lemma}[Balls and Weighted Bins]
\label{lemma:balls and weighted bins}
Suppose we are given at least $B$ balls, and exactly $B$ bins.
Each of the balls is tossed independently and uniformly at random into one of the $B$ bins, where for $i = 1,\ldots,\,B$, bin $i$ has a weight $W_{i}$.
The total weight is $W = \sum_{i = 1}^{B} W_{i}$.
For each bin $i$, we define the random variable $X_{i}$ as \[X_{i} = \left\{\begin{matrix}
W_{i} & \text{if some ball lands in bin } i\\
0 & \text{otherwise}
\end{matrix}\right.\]
and define the random variable $X$ as $X = \sum_{i = 1}^{B} X_{i}$.
Then, for any $\beta$ in the range $0 < \beta < 1$, we have $P\left\{X \geq \beta W\right\} \geq 1 - \frac{1}{\left(1 - \beta\right)\euler}$.
\end{lemma}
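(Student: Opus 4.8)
The plan is to first establish a lower bound on the expectation $E[X]$ and then convert it into the desired tail bound by applying Markov's inequality to the complementary quantity $W - X$, rather than to $X$ directly.

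First I would bound, for each bin $i$, the probability that it receives no ball. Since the balls are tossed independently and uniformly at random, and since there are at least $B$ balls thrown into exactly $B$ bins, the probability that a fixed bin $i$ is missed by every ball is at most $\left(1 - \frac{1}{B}\right)^{B} \leq \frac{1}{\euler}$; throwing additional balls only decreases this probability. Consequently each bin is hit with probability at least $1 - \frac{1}{\euler}$, so $E\left[X_{i}\right] = W_{i} \cdot P\{\text{some ball lands in bin } i\} \geq \left(1 - \frac{1}{\euler}\right) W_{i}$. Summing over all bins and using linearity of expectation yields $E\left[X\right] \geq \left(1 - \frac{1}{\euler}\right) W$.

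Next I would pass to the complement. Define $Y = W - X$. Since $0 \leq X_{i} \leq W_{i}$ for every $i$, we have $X \leq W$, so $Y$ is a nonnegative random variable with $E\left[Y\right] = W - E\left[X\right] \leq \frac{W}{\euler}$. Markov's inequality applied to $Y$ then gives $P\{Y > (1 - \beta) W\} \leq \frac{E\left[Y\right]}{(1-\beta)W} \leq \frac{1}{(1-\beta)\euler}$. Finally I would translate this back to $X$: the event $Y \leq (1-\beta)W$ is exactly the event $X \geq \beta W$, so taking complements gives $P\{X \geq \beta W\} \geq 1 - \frac{1}{(1-\beta)\euler}$, as claimed.

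I do not expect a genuine obstacle here, as this is a standard balls-and-weighted-bins estimate. The only point requiring a little care is that Markov's inequality yields upper-tail bounds, so the lower-tail statement about $X$ must be obtained indirectly through the nonnegative complement $Y = W - X$; the weighting by the $W_{i}$ is handled transparently by linearity of expectation, and the hypothesis of at least $B$ balls is used solely to guarantee the miss-probability bound $\left(1 - \frac{1}{B}\right)^{B} \leq \frac{1}{\euler}$ (extra balls can only help).
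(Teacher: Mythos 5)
Your proof is correct and follows essentially the same route as the paper's: both bound the miss-probability of each bin by $\left(1 - \frac{1}{B}\right)^{B'} \leq \frac{1}{\euler}$ using the hypothesis of at least $B$ balls, deduce $E\left[W - X\right] \leq \frac{W}{\euler}$ by linearity, and apply Markov's inequality to the nonnegative complement $W - X$. The only difference is presentational (you phrase the first step via $E\left[X_{i}\right]$ rather than $E\left[W_{i} - X_{i}\right]$), so there is nothing substantive to distinguish the two arguments.
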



The following result states that for each $P$ idle iterations that take place, with constant probability, the total potential drops by a constant factor.
The result is a consequence of Lemmas~\ref{lemma:potential-decrease} and~\ref{lemma:balls and weighted bins}
(proof in appendix (Lemma~\ref{lemma:app-phase potential decrease})).

\begin{lemma}
\label{lemma:phase potential decrease}
Consider any step $i$ and any later step $j$ such that at least $P$ idle iterations occur from $i$\ (inclusive) to $j$\ (exclusive).
Then, we have $P\left\{\Phi_{i} - \Phi_{j + 2C} \geq \frac{3}{10}\Phi_{i}\left(D_{i}\right)\right\} > \frac{1}{4}$.
\end{lemma}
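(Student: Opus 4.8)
The plan is to recast the $P$ idle iterations as a ball-tossing experiment and feed it into Lemma~\ref{lemma:balls and weighted bins}. I would take the $P$ processors as the $B = P$ bins and assign to each processor $q$ the weight $W_{q} = \Phi_{i}(q)$ if $q \in D_{i}$ and $W_{q} = 0$ if $q \in A_{i}$, so that the total weight is $W = \sum_{q} W_{q} = \Phi_{i}(D_{i})$. The balls are the steal attempts performed during the (at least $P$) idle iterations occurring in $[i,j)$: each such iteration runs \textsc{WorkMigration}, which picks its victim through an independent call to \textsc{UniformlyRandomProcessor}, so every ball lands in one of the $B$ bins independently and uniformly at random, and there are at least $B$ of them, exactly as the lemma requires.

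Second, I would relate a hit bin to a potential drop. Let $X_{q}$ and $X = \sum_{q} X_{q}$ be as in Lemma~\ref{lemma:balls and weighted bins}. Fix any $q \in D_{i}$ that receives at least one ball and let $t \in [i,j)$ be the step of one such steal attempt. Applying Lemma~\ref{lemma:potential-decrease} with victim $q$ and step $t \geq i$ shows that, by step $t + 2C$, the potential has dropped by at least $\frac{3}{5}\Phi_{i}(q) = \frac{3}{5}W_{q}$, and that this drop is attributable to a state change (assignment, or becoming stealable) of the node that is topmost in $q$'s \spd\ at step $i$. Since $t \leq j-1$ we have $t + 2C \leq j + 2C$, and because the total potential is non-increasing over the execution (by Lemma~\ref{lemma:potential properties} every assignment, every node becoming stealable, and every node execution decreases it, while no event raises it), that drop is still present at step $j + 2C$. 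As the topmost nodes of distinct processors in $D_{i}$ are distinct, the drops charged to different hit bins are caused by disjoint events and hence add up; summing over all hit $q \in D_{i}$ gives $\Phi_{i} - \Phi_{j+2C} \geq \frac{3}{5}\sum_{q \text{ hit}} W_{q} = \frac{3}{5}X$.

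Third, I would instantiate Lemma~\ref{lemma:balls and weighted bins} with $\beta = \frac{1}{2}$, which yields $P\{X \geq \frac{1}{2}\Phi_{i}(D_{i})\} \geq 1 - \frac{2}{\euler} > \frac{1}{4}$. Combining this with the previous bound, on the event $X \geq \frac{1}{2}\Phi_{i}(D_{i})$ we get $\Phi_{i} - \Phi_{j+2C} \geq \frac{3}{5}\cdot\frac{1}{2}\Phi_{i}(D_{i}) = \frac{3}{10}\Phi_{i}(D_{i})$, so $P\{\Phi_{i} - \Phi_{j+2C} \geq \frac{3}{10}\Phi_{i}(D_{i})\} \geq P\{X \geq \frac{1}{2}\Phi_{i}(D_{i})\} > \frac{1}{4}$, which is the claim.

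I expect the delicate step to be the middle one: justifying that the per-victim drops guaranteed by Lemma~\ref{lemma:potential-decrease} genuinely accumulate into a single drop of $\frac{3}{5}X$ measured at the common step $j + 2C$. This requires (i) the monotonicity of the global potential, so that a drop realized at step $t + 2C \leq j + 2C$ is not undone later, and (ii) checking that the drops charged to different hit bins involve pairwise distinct nodes (the respective topmost nodes at step $i$), so no potential is double-counted. The balls-and-bins and arithmetic parts are then routine, the constants having been chosen precisely so that $\frac{3}{5}\cdot\frac{1}{2} = \frac{3}{10}$ and $1 - \frac{2}{\euler} > \frac{1}{4}$.
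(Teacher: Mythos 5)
Your proposal is correct and follows essentially the same route as the paper's proof: steal attempts during the $P$ idle iterations are treated as ball tosses into $P$ bins, Lemma~\ref{lemma:potential-decrease} converts a hit bin $q \in D_{i}$ into a potential drop of $\frac{3}{5}\Phi_{i}(q)$ realized by step $j+2C$, and Lemma~\ref{lemma:balls and weighted bins} with $\beta = \frac{1}{2}$ gives the probability bound (the paper folds the factor $\frac{3}{5}$ into the bin weights $W_{p} = \frac{3}{5}\Phi_{i}(p)$ rather than applying it at the end, which is an immaterial rescaling). Your explicit treatment of the two delicate points --- monotonicity of the total potential and the disjointness of the drops charged to distinct victims --- is if anything more careful than the paper's, which leaves both implicit.
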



Following  Lemma~\ref{lemma:phase potential decrease}, we are  able to bound the expected number of idle iterations that take place during a computation's execution using the \cws\ algorithm
(proof in appendix (Lemma~\ref{lemma:app-bounded-idle-iterations})).

\begin{lemma}
\label{lemma:bounded-idle-iterations}
Consider any computation with work $T_{1}$ and critical-path length $T_{\infty}$ being executed by \cws\ using $P$ processors.
The expected number of idle iterations is at most $O\left(P T_{\infty}\right)$, and with probability at least $1 - \varepsilon$, the number of idle iterations is at most $O\left(\left(T_{\infty} + \ln\left(\frac{1}{\varepsilon}\right)\right)P\right)$.
\end{lemma}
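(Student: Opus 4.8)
The plan is to follow the potential-function argument of~\cite{DBLP:journals/mst/AroraBP01}, using Lemma~\ref{lemma:phase potential decrease} as the engine and then translating a bound on the number of ``phases'' into a bound on the number of idle iterations. First I would record three facts about the potential $\Phi$. Its initial value satisfies $\Phi_{0} \le 4^{3T_{\infty}}$, since at the start only the root node is ready; it is non-increasing throughout the execution, because every assignment, every transfer making a node stealable, and every node execution strictly decreases it (Lemma~\ref{lemma:potential properties}) while no other event increases it; and it equals $0$ exactly when the computation terminates, its smallest positive value being bounded below by the fixed constant $\tfrac{1}{16}=4^{-2}$. Hence the number of times $\Phi$ can decrease by a factor of $\tfrac{7}{10}$ before reaching $0$ is at most $\log_{10/7}(16\,\Phi_{0}) = O(T_{\infty})$.

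Next I would partition the execution into phases by counting idle iterations. Setting $t_{0}=0$, and given $t_{k}$, I let $j_{k}$ be the first step at which at least $P$ idle iterations have occurred in $[t_{k},j_{k})$ and set $t_{k+1}=j_{k}+2C$. Each phase then contains $\Theta(P)$ idle iterations: at least $P$ by construction, and at most $O(P)$ because the threshold is overshot by fewer than $P$ (at most one idle iteration completes per processor per step) and the $2C$-step buffer contributes only $O(P)$ more. Consequently, if $L$ denotes the number of phases, the total number of idle iterations satisfies $I=\Theta(PL)$, so it suffices to bound $L$.

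I would then show that each phase is \emph{successful} --- meaning $\Phi_{t_{k+1}} \le \tfrac{7}{10}\Phi_{t_{k}}$ --- with probability exceeding $\tfrac14$, conditioned on the history up to $t_{k}$. This is where the two halves of the potential must be combined. For the empty-deque processors $A_{t_{k}}$, whose attached potential is carried entirely by their assigned nodes, each such node is executed within $C$ steps of $t_{k}$, and by part~3 of Lemma~\ref{lemma:potential properties} this deterministically removes at least $\tfrac{47}{64}\Phi_{t_{k}}(A_{t_{k}})$ of potential, a drop that persists to step $t_{k+1}$ by monotonicity. For the non-empty-deque processors $D_{t_{k}}$, Lemma~\ref{lemma:phase potential decrease} gives a decrease of at least $\tfrac{3}{10}\Phi_{t_{k}}(D_{t_{k}})$ with probability $>\tfrac14$. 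Since these two decreases concern disjoint sets of nodes and both are realized by step $t_{k+1}$, they add, and because $\tfrac{47}{64}>\tfrac{3}{10}$ the combined drop is at least $\tfrac{3}{10}\bigl(\Phi_{t_{k}}(A_{t_{k}})+\Phi_{t_{k}}(D_{t_{k}})\bigr)=\tfrac{3}{10}\Phi_{t_{k}}$ with probability $>\tfrac14$.

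Finally I would convert the per-phase success probability into a bound on $L$. Because at most $m=O(T_{\infty})$ successful phases can occur before $\Phi$ reaches $0$, and each phase succeeds with conditional probability $>\tfrac14$ irrespective of the past, the number of phases is stochastically dominated by the number of $\mathrm{Bernoulli}(\tfrac14)$ trials needed for $m$ successes; this yields $E[L]\le 4m=O(T_{\infty})$, hence $E[I]=O(PT_{\infty})$. For the tail bound, the event $\{L\ge n\}$ implies fewer than $m$ successes among the first $n$ phases, and since the number of successes stochastically dominates a $\mathrm{Binomial}(n,\tfrac14)$, a Chernoff bound gives $P\{L\ge n\}\le\varepsilon$ once $n=O(T_{\infty}+\ln(1/\varepsilon))$, so $I=O\bigl((T_{\infty}+\ln(1/\varepsilon))P\bigr)$ with probability at least $1-\varepsilon$. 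The main obstacle is the combination step: Lemma~\ref{lemma:phase potential decrease} only bounds the drop of the $D_{i}$-part of the potential, so the argument hinges on separately extracting the deterministic decrease of $\Phi_{i}(A_{i})$ and verifying that the two decreases are over disjoint nodes and both captured at step $t_{k+1}$. The secondary subtlety is justifying the stochastic-domination and Chernoff steps despite the phases not being independent, which is handled by the uniform conditional lower bound of $\tfrac14$ on a phase's success probability.
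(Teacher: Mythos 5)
Your proposal is correct and follows essentially the same route as the paper's proof: the same phase decomposition into $\Theta(P)$ idle iterations with a $2C$ buffer, the same combination of the deterministic $\frac{47}{64}\Phi_{i}\left(A_{i}\right)$ drop with the probabilistic $\frac{3}{10}\Phi_{i}\left(D_{i}\right)$ drop from Lemma~\ref{lemma:phase potential decrease}, and the same counting of at most $O\left(T_{\infty}\right)$ successful phases followed by a Chernoff tail bound. Your explicit treatment of the conditional success probability and stochastic domination, and of the smallest positive value of the potential, is slightly more careful than the paper's informal coin-toss phrasing, but the argument is the same.
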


Finally, using Lemma~\ref{lemma:bounded-idle-iterations}, we can obtain bounds on both expected runtime of computations executed by the \cws\ algorithm, and the associated synchronization overheads.
\begin{theorem}
\label{theorem:bounded-execution time-memory-fences issued}
Consider any computation with work $T_{1}$ and critical-path length $T_{\infty}$ being executed by the \cws\ algorithm with $P$ processors.
The expected execution time is at most $O\left(\frac{T_{1}}{P} + T_{\infty}\right)$, and with probability at least $1 - \varepsilon$, the execution time is at most $O\left(\frac{T_{1}}{P} + T_{\infty} + \ln\left(\frac{1}{\varepsilon}\right)\right)$.
Moreover, the expected number of \cas\ and \mfence\ instructions executed during the computation's execution caused by \cws\ is at most $O\left(P T_{\infty}\right)$, and with probability at least $1 - \varepsilon$ the number of \cas\ and \mfence\ instructions executed is at most $O\left(P\left(T_{\infty} + \ln\left(\frac{1}{\varepsilon}\right)\right)\right)$.
\end{theorem}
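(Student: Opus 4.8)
The plan is to obtain the theorem by composing the three preceding lemmas: Lemma~\ref{lemma:runtime-by-iterations} and Lemma~\ref{lemma:cas-and-mem-barriers-by-throws} each reduce one of the two quantities of interest to the number $I$ of idle iterations, and Lemma~\ref{lemma:bounded-idle-iterations} controls $I$ both in expectation and with high probability. The key observation making this clean is that the only randomness in an execution comes from the thieves' uniformly random choices of victims, and that both Lemma~\ref{lemma:runtime-by-iterations} and Lemma~\ref{lemma:cas-and-mem-barriers-by-throws} give bounds that hold deterministically once $I$ is fixed; hence all the randomness is encapsulated in $I$, and the final step is simply to propagate the bounds on $I$ through these two deterministic inequalities.

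First I would handle the execution time. By Lemma~\ref{lemma:runtime-by-iterations} there is a constant $c$, independent of the random choices, such that the execution time is at most $c\left(T_1/P + I/P\right)$ for every realization. Taking expectations, using linearity, and invoking the bound $E[I] = O(PT_\infty)$ from Lemma~\ref{lemma:bounded-idle-iterations} gives $E[\text{time}] \le c\left(T_1/P + E[I]/P\right) = O\!\left(T_1/P + T_\infty\right)$. For the high-probability statement I would condition on the event that $I = O\!\left(\left(T_\infty + \ln(1/\varepsilon)\right)P\right)$, which by Lemma~\ref{lemma:bounded-idle-iterations} holds with probability at least $1-\varepsilon$; substituting this into $c\left(T_1/P + I/P\right)$ yields execution time $O\!\left(T_1/P + T_\infty + \ln(1/\varepsilon)\right)$ on that event, as required.

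The synchronization-instruction count is handled identically. By Lemma~\ref{lemma:cas-and-mem-barriers-by-throws} the number of \cas\ and \mfence\ instructions is $O(I + P)$, again a bound that is deterministic in $I$. Taking expectations and applying $E[I] = O(PT_\infty)$ gives $O\!\left(E[I] + P\right) = O(PT_\infty + P)$, and since $T_\infty \ge 1$ the additive $P$ is absorbed, leaving $O(PT_\infty)$. For the high-probability bound I would substitute $I = O\!\left(\left(T_\infty + \ln(1/\varepsilon)\right)P\right)$ to obtain $O\!\left(P\left(T_\infty + \ln(1/\varepsilon)\right) + P\right) = O\!\left(P\left(T_\infty + \ln(1/\varepsilon)\right)\right)$ on the same probability-$(1-\varepsilon)$ event.

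I expect essentially no substantive obstacle here: all of the difficulty resides in the earlier lemmas, and this final argument is a bookkeeping composition. The only points demanding care are (i) confirming that the constants hidden in the $O(\cdot)$ notation of Lemma~\ref{lemma:runtime-by-iterations} and Lemma~\ref{lemma:cas-and-mem-barriers-by-throws} do not depend on the random outcome, so that linearity of expectation applies to the deterministic-in-$I$ inequalities without subtlety, and (ii) checking the trivial normalizations $T_\infty \ge 1$ and $T_1 \ge T_\infty$ so that lower-order additive terms such as the stray $P$ are cleanly absorbed into the stated bounds.
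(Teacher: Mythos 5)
Your proposal is correct and follows exactly the paper's own (one-line) proof: the theorem is obtained by composing Lemmas~\ref{lemma:runtime-by-iterations}, \ref{lemma:cas-and-mem-barriers-by-throws} and~\ref{lemma:bounded-idle-iterations}, with the randomness encapsulated entirely in $I$. Your additional care about deterministic-in-$I$ constants and absorbing the additive $P$ term is sound and merely makes explicit what the paper leaves implicit.
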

\begin{proof}
Both results follow directly from Lemmas~\ref{lemma:runtime-by-iterations}, \ref{lemma:cas-and-mem-barriers-by-throws} and~\ref{lemma:bounded-idle-iterations}.
\end{proof}


\begin{corollary}
\label{corollary:bounded-sync-overheads}
Consider the statement of Theorem~\ref{theorem:bounded-execution time-memory-fences issued}.
Furthermore, let $C_{\cas}$ and $C_{\mfence}$ denote, respectively, the maximum synchronization overheads incurred by the execution of a \cas\ and \mfence\ instructions.
The expected synchronization overheads incurred by \cws\ are at most \[O\left(\left(C_{\cas} + C_{\mfence}\right)PT_{\infty}\right),\] and with probability at least $1 - \varepsilon$ the synchronization overheads incurred by \cws\ are at most $O\left(\left(C_{\cas} + C_{\mfence}\right)P\left(T_{\infty} + \ln\left(\frac{1}{\varepsilon}\right)\right)\right)$.
\end{corollary}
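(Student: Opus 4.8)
The plan is to treat this as an immediate consequence of the instruction-count bound already established in Theorem~\ref{theorem:bounded-execution time-memory-fences issued}. The only additional observation needed is that synchronization overhead is, by definition, the sum of the costs of the individual \cas\ and \mfence\ instructions executed during the run, and that each such cost is bounded by $C_{\cas}$ or $C_{\mfence}$ respectively. So the work reduces to converting a bound on the \emph{number} of synchronization instructions into a bound on their \emph{total cost}.

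First I would fix the random execution and let $N_{\cas}$ and $N_{\mfence}$ denote the numbers of \cas\ and \mfence\ instructions executed during it. Writing the total synchronization overhead $S$ as the sum of the costs charged to each such instruction and using that no single \cas\ costs more than $C_{\cas}$ and no single \mfence\ more than $C_{\mfence}$, I obtain the pointwise bound $S \leq N_{\cas}\,C_{\cas} + N_{\mfence}\,C_{\mfence}$. Since Theorem~\ref{theorem:bounded-execution time-memory-fences issued} controls the \emph{combined} count $N_{\cas} + N_{\mfence}$ rather than the two counts separately, I would relax this to the uniform bound $S \leq \left(C_{\cas} + C_{\mfence}\right)\left(N_{\cas} + N_{\mfence}\right)$, which holds because $C_{\cas} + C_{\mfence}$ upper-bounds the cost of any single synchronization instruction.

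From here both claims drop out by plugging in the two guarantees of the Theorem. Taking expectations on the pointwise inequality and using linearity gives $E[S] \leq \left(C_{\cas} + C_{\mfence}\right)\,E[N_{\cas} + N_{\mfence}] = O\left(\left(C_{\cas} + C_{\mfence}\right)P T_{\infty}\right)$, since the Theorem gives $E[N_{\cas} + N_{\mfence}] = O(P T_{\infty})$. For the tail bound I would condition on the high-probability event of the Theorem: with probability at least $1 - \varepsilon$ we have $N_{\cas} + N_{\mfence} = O\left(P\left(T_{\infty} + \ln(1/\varepsilon)\right)\right)$, and on that same event the pointwise inequality immediately yields $S = O\left(\left(C_{\cas} + C_{\mfence}\right)P\left(T_{\infty} + \ln(1/\varepsilon)\right)\right)$.

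I do not expect any genuine obstacle here, since the substantive work has already been done in bounding the instruction counts. The only point requiring a moment's care is that the Theorem bounds the sum of the two counts rather than each count individually, so I must use $C_{\cas} + C_{\mfence}$ as a common per-instruction cost instead of keeping the two terms separate; this costs at most a factor of two and does not affect the asymptotics.
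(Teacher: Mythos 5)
Your proposal is correct and follows essentially the same route as the paper: both reduce the corollary to the instruction-count bounds of Theorem~\ref{theorem:bounded-execution time-memory-fences issued} by charging each \cas\ and \mfence\ instruction its maximum cost (the paper additionally notes explicitly that these are the only synchronization mechanisms \cws\ uses, which you assume implicitly). Your write-up merely makes the pointwise inequality, the expectation step, and the conditioning on the high-probability event explicit.
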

\begin{proof}
As already mentioned, and by considering the definition of \cws, depicted in Algorithm~\ref{algo:cws}, the only synchronization mechanisms the scheduler uses are \cas\ and \mfence\ instructions.
This corollary is then a direct consequence of Theorem~\ref{theorem:bounded-execution time-memory-fences issued} that takes into account the maximum possible overhead incurred by the execution of a single \cas\ and \mfence\ instructions.
\end{proof}

\section{Comparison with Work Stealing}
\label{sub:comparison}
To get a better understanding of the importance of avoiding synchronization for local deque accesses, we now compare the synchronization costs of our algorithm against conventional \ws\ algorithms that use concurrent deques.
To that end, we developed a simulator that, given a computation's dag, executes it, monitoring not only the number of \cas\ and \mfence\ instructions executed but also the number times that thieves requested other processors to expose work.
In this section we use the term \textit{notification} to refer to when a thief sets another processor's $targeted$ flag to \textsc{true}, requesting it to expose work.

\begin{figure*}
\centering
\begin{subfigure}{0.32\textwidth}
\centering
\resizebox{\linewidth}{!}{%
\begin{tikzpicture}
\begin{semilogyaxis}[
    title={},
    xlabel={Span},
    ylabel={Nr. sync. ops.},
    xmin=0, xmax=25,
    ymin=0, ymax=100000000,
    xtick={0,5,10,15,20,25},
    ytick={0,10,100,1000,10000,100000,1000000,10000000,100000000},
    legend pos=north west,
    ymajorgrids=true,
    grid style=dashed,
]

\addplot[
    color=blue,
    mark=+,
    ]
    coordinates {
    (0,64 + 64)
    (1,160 + 159)
    (2,288 + 291)
    (3,448 + 449)
    (4,592 + 581)
    (5,800 + 760)
    (6,1051 + 978)
    (7,1269 + 1092)
    (8,1383 + 1383)
    (9,2432 + 1745)
    (10,3456 + 1973)
    (11,5760 + 2442)
    (12,9920 + 2950)
    (13,18208 + 3514)
    (14,34742 + 4157)
    (15,67712 + 5059)
    (16,133416 + 5844)
    (17,264352 + 6659)
    (18,526771 + 7780)
    (19,1051209 + 8854)
    (20,2100004 + 10659)
    (21,4197147 + 11884)
    (22,8391530 + 13045)
    (23,16780309 + 15509)
    (24,33557664 + 16374)
    (25,67112296 + 19076)
};

\addplot[
    color=red,
    mark=x,
    ]
    coordinates {
    (0,64 + 64)
    (1,190 + 190)
    (2,352 + 352)
    (3,502 + 502)
    (4,785 + 785)
    (5,1037 + 1037)
    (6,1240 + 1240)
    (7,1522 + 1522)
    (8,1804 + 1804)
    (9,2059 + 2059)
    (10,2431 + 2431)
    (11,2688 + 2688)
    (12,2976 + 2976)
    (13,3334 + 3334)
    (14,3536 + 3536)
    (15,3876 + 3876)
    (16,4222 + 4222)
    (17,4776 + 4776)
    (18,5024 + 5024)
    (19,5363 + 5363)
    (20,5752 + 5752)
    (21,5828 + 5828)
    (22,6290 + 6290)
    (23,6703 + 6703)
    (24,6964 + 6964)
    (25,7427 + 7427)
};

\addplot[
    color=brown,
    mark=o,
    ]
    coordinates {
    (0,63)
    (1,188)
    (2,343)
    (3,521)
    (4,766)
    (5,1018)
    (6,1178)
    (7,1405)
    (8,1614)
    (9,1752)
    (10,1989)
    (11,2077)
    (12,2224)
    (13,2401)
    (14,2414)
    (15,2648)
    (16,2788)
    (17,3077)
    (18,3139)
    (19,3250)
    (20,3520)
    (21,3474)
    (22,3690)
    (23,3905)
    (24,3993)
    (25,4265)
};

\legend{CWS (CAS + MFence), LCWS (CAS + MFence), LCWS (Notifs)}

\end{semilogyaxis}
\end{tikzpicture}
}
\caption{64 procs, \#sync ops vs span, regular dags.} 
\label{fig:rec-nr-ops-span}
\end{subfigure}
\quad
\begin{subfigure}{0.32\textwidth}
\resizebox{\linewidth}{!}{
\begin{tikzpicture}
\begin{semilogyaxis}[
    title={},
    xlabel={Span},
    ylabel={Nr. sync. ops.},
    xmin=300, xmax=650,
    ymin=0, ymax=100000000,
    xtick={300,350,400,450,500,550,600,650},
    ytick={0,10,100,1000,10000,100000,1000000, 10000000, 100000000},
    legend pos=north west,
    ymajorgrids=true,
    grid style=dashed,
]

\addplot[
    color=blue,
    mark=+,
    ]
    coordinates {
    (300,27788 + 11977)
    (350,142188 + 12352)
    (400,535471 + 12882)
    (450,1062057 + 15729)
    (500,2113654 + 19289)
    (550,4211329 + 20013)
    (600,8408413 + 23112)
    (650,16794499 + 20928)
};

\addplot[
    color=red,
    mark=x,
    ]
    coordinates {
    (300,12380 + 12380)
    (350,13122 + 13122)
    (400,13848 + 13848)
    (450,15695 + 15695)
    (500,18593 + 18593)
    (550,20171 + 20171)
    (600,21548 + 21548)
    (650,19986 + 19986)
};
\addplot[
    color=brown,
    mark=o,
    ]
    coordinates {
    (300,11874)
    (350,12138)
    (400,12601)
    (450,14385)
    (500,17152)
    (550,18554)
    (600,19799)
    (650,18496)
};

\legend{CWS (CAS + MFence), LCWS (CAS + MFence), LCWS (Notifs)}
\end{semilogyaxis}
\end{tikzpicture}
}
\caption{64 procs, \#sync ops vs span, irregular dags.} 
\label{fig:unb-nr-ops-span}
\end{subfigure}
\begin{subfigure}{0.32\textwidth}
\resizebox{\linewidth}{!}{
\begin{tikzpicture}
\begin{semilogyaxis}[
    title={},
    xlabel={Processors executing the computation},
    ylabel={Nr. sync. ops.},
    xmin=1, xmax=64,
    ymin=0, ymax=10000000,
    xtick={1,8,16,24,32,40,48,56,64},
    ytick={0,10,100,1000,10000,100000,1000000, 10000000, 100000000},
    legend pos=south east,
    ymajorgrids=true,
    grid style=dashed,
]

\addplot[
    color=blue,
    mark=+,
    ]
    coordinates {
    (1,2097151 + 21)
    (2,2097156 + 45)
    (4,2097178 + 169)
    (8,2097384 + 1098)
    (16,2097744 + 2658)
    (32,2098418 + 5259)
    (64,2099814 + 10416)
    };

\addplot[
    color=red,
    mark=x,
    ]
    coordinates {
    (1,1 + 1 + 0)
    (2,12 + 12 + 8)
    (4,48 + 48 + 37)
    (8,431 + 431 + 268)
    (16,1298 + 1298 + 749)
    (32,2797 + 2797 + 1644)
    (64,5629 + 5629 + 3431)
    };

    \legend{CWS (CAS + MFence), LCWS (CAS + MFence + Notifs)}

\end{semilogyaxis}
\end{tikzpicture}
}
\caption{\#sync ops vs \#procs, regular dags.}
\label{fig:rec-nr-ops-procs}
\end{subfigure}
\caption{Comparison of Low Cost Work Stealing (LCWS) with Classical Work Stealing (CWS).}
\end{figure*}
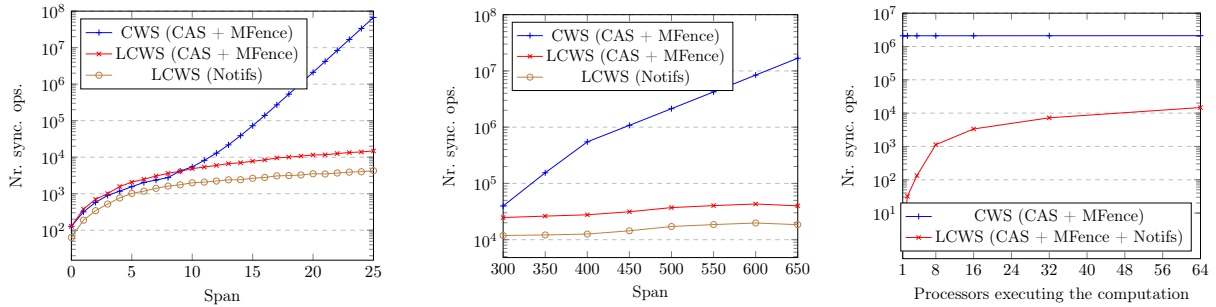

For this comparison, we consider two distinct classes of dags: \emph{regular} and \emph{irregular}.
Regular dags  essentially correspond to trees of instructions where every non-leaf instruction forks two other instructions, and whose depth is given by an argument that is passed to the simulator.
Irregular Dags are intended to simulate unbalanced computations.
To that end, we use the argument passed to the simulator as the total depth of the dag, and make the depth between each two consecutive fork instructions follow an exponential distribution with parameter $\lambda = 0.05$.
The first class of dags corresponds to computations with balanced parallelism (\textit{e.g.}~Fibonacci, Parallel-For, \textit{etc}) whist the second corresponds to the ones with unbalanced parallelism (\textit{e.g.}~Graph Searches).

From Figure~\ref{fig:rec-nr-ops-span}, it is clear that while for \ws\ with concurrent deques the number of synchronization operations grows linearly with the total amount of work (and thus exponentially increases with the span of the dag), for \cws\ the number of synchronization operations and notifications scales linearly with the span of the computation.
Thus, even if the costs of handling notifications (\textit{i.e.}~of exposing work) were a thousand times greater than the cost of executing \cas\ or \mfence\ instructions, for dags with fork-span of at least $\approx 20$, our algorithm would incur in less synchronization overheads.
In practice, computations with a fork-span $\geq 20$ are very common, especially among fine-grained parallelism.
Unfortunately, due to the limitations that come with using a simulator, we have not been able to benchmark dags with a fork-span greater than $25$.
Yet, we remark that the trend is obvious and confirms that the use of \spds\ allows to avoid most of the synchronization that is present in conventional \ws\ algorithms.
Figure~\ref{fig:unb-nr-ops-span} reinforces our insight, showing that even for computations exhibiting irregular parallelism, \cws\ is able to avoid most of the synchronization present in \ws\ algorithms that use concurrent deques.
Finally, Figure~\ref{fig:rec-nr-ops-procs} shows that, while the synchronization costs for \ws\ are always extremely high, even for single processor executions, for \cws\ these costs only grow linearly with the number of processors used and, for a single processor execution, synchronization is negligible.

From a more theoretical perspective, note that, by taking into account our assumptions (which are standard~\cite{DBLP:journals/mst/AcarBB02,DBLP:conf/ppopp/AcarCR13,DBLP:conf/ipps/AgrawalHHL07,DBLP:journals/tocs/AgrawalLHH08, DBLP:conf/spaa/AroraBP98,DBLP:journals/mst/AroraBP01,DBLP:journals/jacm/BlumofeL99,DBLP:conf/spaa/MullerA16,DBLP:conf/isaac/TchiboukdjianGTRB10}) regarding computations' structure, we can create computations for which $T_{1} = O\left(2^{T_{\infty}}\right)$ (which correspond to dags of the first class).
Since for such computations the number of deque accesses is directly proportional to the total amount of work ($T_{1}$), our result shows that the use of \spds\ allows to reduce by almost an exponential factor the synchronization present in conventional \ws\ algorithms.
\section{Conclusion}
\label{sec:conclusion}
In this paper we studied a \ws\ algorithm that uses \spds\ to reduce synchronization overheads.
Whereas traditional \ws\ algorithms require synchronization for every time processors access deques, in our proposal, synchronization operations are employed optimally, which is the key for eliminating most unnecessary synchronization overheads.
By default, busy processors operate locally on their deques without any synchronization, resembling a sequential execution.
Idle processors can request busy ones to expose some of their work, thus allowing for load balancing via direct steals.
This lazy approach for using synchronization is the key for guaranteeing an asymptotically optimal expected runtime while provably reducing synchronization overheads.
Indeed, we proved that the expected total synchronization of the algorithm is $O\left(PT_{\infty}\left(C_{\cas} + C_{\mfence}\right)\right)$.
To justify the tightness of our bounds, we recall that, for \cws, the expected number of (successful and unsuccessful) steal attempts is $O\left(PT_{\infty}\right)$.
By noting that the public part of an \spd\ is essentially a concurrent deque, and, by taking into account the impossibility of eliminating all synchronization from the implementation of concurrent deques while maintaining their correctness (see~\cite{DBLP:conf/popl/AttiyaGHKMV11}), we conclude that the synchronization bounds we have obtained for \cws\ are tight.


Finally, and as already discussed in Section~\ref{sub:comparison}, for several types of computations, the synchronization overheads of conventional \ws\ algorithms grow linearly with both the total amount of work and the number of steal attempts.
%
For numerous classes of parallel computations, the total amount of work increases exponentially with the span of the computation (i.e. $T_{1} = O\left(2^{T_{\infty}}\right)$).
From this perspective, our results make evident the significance of the improvement of \cws\ over prior \ws\ algorithms: not only are the synchronization overheads incurred by our algorithm (essentially) exponentially smaller than previous algorithms, but our algorithm also maintains the asymptotically optimal expected runtime bounds of the concurrent deque \ws\ algorithm~\cite{DBLP:journals/mst/AroraBP01}.
%




\clearpage

\bibliographystyle{plain}
\bibliography{bibliography}

\clearpage

\section*{Appendix: Proofs}
\label{sec:proofs}
Due to space constraints, most of the proofs of our claims are placed in this appendix.
Nevertheless, if this paper is accepted, we will publish a full version of this paper, with proofs, in a freely accessible on-line repository.



The following lemma is crucial for the performance analysis of \cws.
An analogous result has already been proved for concurrent deques (see~\cite[Lemma 3]{DBLP:journals/mst/AroraBP01}).
For the sake of completion we present its proof, which is a simple transcription of original proof of~\cite[Lemma 3]{DBLP:journals/mst/AroraBP01}, adapted for \spds.

\begin{lemma}[Structural Lemma for \spds]
\label{lemma:app-structural lemma}
Let $v_{1},\ldots,v_{k}$ denote the nodes stored in some processor $p$'s \spd, ordered from the bottom of the \spd\ to the top, at some point in the linearized execution of \cws.
Moreover, let $v_{0}$ denote $p$'s assigned node (if any), and for $i = 0,\ldots,k$ let $u_{i}$ denote the designated parent of $v_{i}$ in the enabling tree.
Then, for $i = 1,\ldots,k$, $u_{i}$ is an ancestor of $u_{i-1}$ in the enabling tree, and despite $v_{0}$ and $v_{1}$ may have the same designated parent (\textit{i.e.}~$u_{0} = u_{1}$), for $i = 2,3,\ldots,k$, $u_{i-1} \neq u_{i}$ (\textit{i.e.}~the ancestor relationship is proper).
\end{lemma}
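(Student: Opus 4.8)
The plan is to argue by induction over the operations in the linearized execution that modify a processor's \emph{structural state}, by which I mean the pair consisting of its assigned node together with the ordered contents of its \spd. The first observation is that, of the five \spd\ methods, only some of them actually change this structural state: \textsl{updateBottom} merely shifts the boundary between the public and private parts and hence leaves the sequence $v_{0}\mid v_{1},\ldots,v_{k}$ untouched, so it can be ignored. The transitions that matter are therefore the execution of an assigned node (line 8 of Algorithm~\ref{algo:cws}, which may trigger a \textsl{push} and/or a \textsl{pop}/\textsl{popBottom}) and a successful \textsl{popTop} performed by a thief. I would take as the inductive invariant exactly the statement of the lemma, maintained simultaneously for the structural state of every processor, and verify that each atomic operation preserves it. The base case is immediate: before execution begins the only non-empty structural state is the single assigned root with an empty \spd, for which the claim holds vacuously.

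The technical heart of the argument is a single auxiliary observation: \emph{whenever a processor executes its assigned node $v_{0}$, that node $v_{0}$ is a proper descendant of $u_{1}$ in the enabling tree}, where $u_{1}$ is the designated parent of the bottom-most \spd\ node $v_{1}$ (when one exists). This follows from the inductive hypothesis, which gives that $u_{1}$ is an ancestor of $u_{0}$ (the designated parent of $v_{0}$); since $v_{0}$ is by definition a child of $u_{0}$ in the enabling tree, it sits strictly below $u_{1}$. I expect this step to be the main obstacle, as it is exactly where the allowed equality $u_{0}=u_{1}$ must be reconciled with the strict inequalities required lower in the chain: one has to check that even in the borderline case $u_{0}=u_{1}$ the node $v_{0}$ still becomes a \emph{proper} descendant of $u_{1}$, so that it can legitimately occupy the new exception slot while still supplying a strict relation one position higher.

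With that observation in hand, I would carry out the case analysis on the execution of $v_{0}$. If zero nodes are enabled, $v_{0}$ disappears and the former $v_{1}$ becomes assigned; the chain simply loses its bottom element, and the proper-ancestor relation between $u_{2}$ and $u_{1}$ supplied by the old invariant now fills the vacated exception slot, where only ancestry is required. If one node is enabled, it becomes the new assigned node with designated parent $v_{0}$, and the auxiliary observation gives that $u_{1}$ is a proper ancestor of $v_{0}$, which is all the $i=1$ slot demands. If two nodes are enabled, one becomes assigned and the other is pushed as the new bottom-most \spd\ node; both have designated parent $v_{0}$, so the new $u_{0}=u_{1}=v_{0}$ uses the permitted equality, while $u_{1}$ being a proper ancestor of $v_{0}$ furnishes the strict relation at the next position. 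Finally, a successful \textsl{popTop} removes $v_{k}$ from a victim and installs it as the thief's assigned node: the victim's chain is truncated at the top, remaining a proper chain, and the thief acquires a singleton structural state, which is vacuously valid. Since these are the only structure-changing transitions, the invariant is preserved at every linearization point, which establishes the lemma.
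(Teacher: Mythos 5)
Your proposal is correct and follows essentially the same route as the paper's proof: an induction over the structure-changing transitions (assigned-node execution with the three enabling cases, and successful steals), with the key step being that $v_{0}$ is a proper descendant of $u_{1}$ because $(u_{0},v_{0})$ is an enabling edge and $u_{1}$ is an ancestor of $u_{0}$. Your explicit remark that \textsl{updateBottom} leaves the structural state untouched is a point the paper only asserts implicitly, but otherwise the arguments coincide.
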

\begin{proof}
Fix a particular \spd.
The \spd\ state and assigned node only change when the assigned node is executed or a thief performs a successful steal.
We prove the claim by induction on the number of assigned-node executions and steals since the \spd\ was last empty.
In the base case, if the \spd\ is empty, then the claim holds vacuously.
We now assume that the claim holds before a given assigned-node execution or successful steal, and we will show that it holds after.
Specifically, before the assigned-node execution or successful steal, let $v_{0}$ denote the assigned node;
let $k$ denote the number of nodes in the \spd;
let $v_{1},\ldots,v_{k}$ denote the nodes in the \spd\ ordered from the bottom to top;
and for $i=0,\ldots,k$, let $u_{i}$ denote the designated parent of $v_{i}$.
We assume that either $k = 0$, or for $i = 1,\ldots,k$, node $u_{i}$ is an ancestor of $u_{i-1}$ in the enabling tree, with the ancestor relationship being proper, except possibly for the case $i = 1$.
After the assigned-node execution or successful steal, let ${v_{0}}'$ denote the assigned node;
let ${k}'$ denote the number of nodes in the \spd;
let ${v_{1}}', \ldots, {v_{k}}'$ denote the nodes in the \spd\ ordered from bottom to top;
and for $i = 1,\ldots,{k}'$, let ${u_{i}}'$ denote the designated parent of ${v_{i}}'$.
We now show that either ${k}' = 0$, or for $i = 1,\ldots,{k}'$, node ${u_{i}}'$ is an ancestor of ${u_{i-1}}'$ in the enabling tree, with the ancestor relationship being proper, except possibly for the case $i = 1$.

Consider the execution of the assigned node $v_{0}$ by the owner.

If the execution of $v_{0}$ enables 0 children, then the owner pops the bottommost node off its \spd\ and makes that node its new assigned node.
If $k = 0$, then the \spd\ is empty;
the owner does not get a new assigned node;
and ${k}' = 0$.
If $k > 0$, then the bottommost node $v_{1}$ is popped and becomes the new assigned node, and ${k}' = k - 1$.
If $k = 1$, then ${k}' = 0$.
Otherwise, ${k}' = k - 1$.
We now rename the nodes as follows.
For $i = 0,\ldots,{k}'$, we set ${v_{i}}' = v_{i + 1}$ and ${u_{i}}' = u_{i + 1}$.
We now observe that for $i = 1,\ldots,{k}'$, node ${u_{i}}'$ is a proper ancestor of ${u_{i - 1}}'$ in the enabling tree.

If the execution of $v_{0}$ enables 1 child $x$, then $x$ becomes the new assigned node;
the designated parent of $x$ is $v_{0}$;
and ${k}' = k$.
If $k = 0$, then ${k}' = 0$.
Otherwise, we can rename the nodes as follows.
We set ${v_{0}}' = x$;
we set ${u_{0}}' = v_{0}$;
and for $i = 1,\ldots,{k}'$, we set ${v_{i}}' = v_{i}$ and ${u_{i}}' = u_{i}$.
We now observe that for $i = 1,\ldots,{k}'$, node ${u_{i}}'$ is a proper ancestor of ${u_{i - 1}}'$ in the enabling tree.
That ${u_{1}}'$ is a proper ancestor of ${u_{0}}'$ in the enabling tree follows from the fact that $\left(u_{0},v_{0}\right)$ is an enabling edge.

In the most interesting case, the execution of the assigned node $v_{0}$ enables 2 children $x$ and $y$, with $x$ being pushed onto the bottom of the \spd\ and $y$ becoming the new assigned node.
In this case, $\left(v_{0},x\right)$ and $\left(v_{0},y\right)$ are both enabling edges, and ${k}' = k + 1$.
We now rename the nodes as follows.
We set ${v_{0}}' = y$;
we set ${u_{0}}' = v_{0}$;
we set ${v_{1}}' = x$;
we set ${u_{1}}' = v_{0}$;
and for $i = 2,\ldots,{k}'$, we set ${v_{i}}' = v_{i - 1}$ and ${u_{i}}' = u_{i - 1}$.
We now observe that ${u_{1}}' = {u_{0}}'$, and for $i = 2,\ldots,{k}'$, node ${u_{i}}'$ is a proper ancestor of ${u_{i - 1}}'$ in the enabling tree.
That ${u_{2}}'$ is a proper ancestor of ${u_{1}}'$ in the enabling tree follows from the fact that $\left(u_{0},v_{0}\right)$ is an enabling edge.

Finally, we consider a successful steal by a thief.
In this case, the thief pops the topmost node $v_{k}$ off the \spd, so ${k}' = k - 1$.
If $k = 1$, then ${k}' = 0$.
Otherwise, we can rename the nodes as follows.
For $i = 0,\ldots,{k}'$, we set ${v_{i}}' = v_{i}$ and ${u_{i}}' = u_{i}$.
We now observe that for $i = 1,\ldots,{k}'$, node ${u_{i}}'$ is an ancestor of ${u_{i-1}}'$ in the enabling tree, with the ancestor relationship being proper, except possibly for the case $i = 1$.
\end{proof}

\begin{corollary}
\label{corollary:app-structural corollary}
If $v_{0},\,v_{1},\,\ldots,\,v_{k}$ are as defined in the statement of Lemma~\ref{lemma:app-structural lemma}, then we have $w\left(v_{0}\right) \leq w\left(v_{1}\right) < \ldots < w\left(v_{k-1}\right) < w\left(v_{k}\right)$.
\end{corollary}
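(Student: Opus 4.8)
The plan is to translate the structural information about designated parents provided by Lemma~\ref{lemma:app-structural lemma} into a statement about weights, using the single bridging observation that each $v_i$ is the designated \emph{child} of $u_i$ and hence sits exactly one level below $u_i$ in the enabling tree.

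First I would record the key identity. Since $\left(u_i, v_i\right)$ is an enabling edge for every $i = 0, 1, \ldots, k$, node $v_i$ is a child of $u_i$ in the enabling tree, so $d\left(v_i\right) = d\left(u_i\right) + 1$ and therefore $w\left(v_i\right) = T_{\infty} - d\left(v_i\right) = w\left(u_i\right) - 1$. This reduces any comparison of consecutive weights $w\left(v_{i-1}\right)$ and $w\left(v_i\right)$ to the comparison of $w\left(u_{i-1}\right)$ and $w\left(u_i\right)$, shifted uniformly by $1$.

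Next I would invoke the elementary fact that if a node $a$ is an ancestor of a node $b$ in a rooted tree, then $d\left(a\right) \leq d\left(b\right)$, with strict inequality precisely when $a$ is a \emph{proper} ancestor. Since $w\left(\cdot\right) = T_{\infty} - d\left(\cdot\right)$ is order-reversing in depth, an ancestor has weight at least that of its descendant, strictly greater when the ancestor relationship is proper. Applying this to the chain $u_0, u_1, \ldots, u_k$ furnished by Lemma~\ref{lemma:app-structural lemma} --- where $u_i$ is an ancestor of $u_{i-1}$ for every $i = 1, \ldots, k$, properly so for $i \geq 2$ --- yields $w\left(u_0\right) \leq w\left(u_1\right) < w\left(u_2\right) < \cdots < w\left(u_k\right)$. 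Subtracting $1$ from every term preserves all the (non-)strict inequalities, and by the identity of the previous step this is exactly $w\left(v_0\right) \leq w\left(v_1\right) < \cdots < w\left(v_k\right)$, as claimed.

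The argument carries essentially no computational difficulty once the structural lemma is in hand; the only point demanding care is the bookkeeping of strictness. The initial step $i = 1$ must allow equality, reflecting that $v_0$ and $v_1$ can share a designated parent (the case $u_0 = u_1$), whereas every step $i \geq 2$ is strict because the corresponding ancestor relationship is proper. Keeping this distinction aligned with the mixed $\leq / <$ pattern in the statement is the one place where a careless transcription would go wrong.
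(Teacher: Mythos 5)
Your proposal is correct and is precisely the derivation the paper intends: the paper states this corollary without proof as a ``direct consequence'' of the structural lemma, and your argument --- translating the ancestor chain $u_0, u_1, \ldots, u_k$ into depths, hence weights, and shifting by one via $w\left(v_i\right) = w\left(u_i\right) - 1$ --- is the standard way to fill in that gap. Your bookkeeping of where the inequality is weak (the $i = 1$ step, where $u_0 = u_1$ is possible) versus strict (all $i \geq 2$, where the ancestor relationship is proper) exactly matches the $\leq / <$ pattern in the statement.
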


We are now able to bound the execution time of a computation depending on the number of idle iterations that take place during that computation's execution.
The following result is a trivial variant of~\cite[Lemma 5]{DBLP:journals/mst/AroraBP01} but considering the \cws\ algorithm, and is only added for the sake of completion.

\begin{lemma}
\label{lemma:app-runtime-by-iterations}
Consider any computation with work $T_{1}$ being executed by $P$ processors, under \cws.
The execution time is $O\left(\frac{T_{1}}{P} + \frac{I}{P}\right)$, where $I$ denotes the number of idle iterations executed by processors.
\end{lemma}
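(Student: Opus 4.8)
The plan is to adapt the token-counting argument of~\cite[Lemma 5]{DBLP:journals/mst/AroraBP01} to the scheduling-iteration structure of \cws. Since a computation's execution is partitioned into discrete time steps at which every one of the $P$ processors executes exactly one instruction, the total number of instructions executed over the whole execution equals $P$ times the execution time $T$. I would therefore bound this same total from above by counting instructions iteration by iteration, and then solve the resulting inequality for $T$.

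First I would recall two facts already in place: each scheduling iteration contains exactly one milestone, and (since the \spd\ implementation is constant-time and the scheduling loop contains no unbounded loops) each iteration consists of at most a constant number of instructions, bounded by the constant $C$. I would then classify iterations as busy or idle according to their milestone. A busy iteration's milestone is the execution of a node (line 8), and since each node of the dag is executed exactly once, the total number of busy iterations over all processors is exactly $T_1$; by definition the total number of idle iterations over all processors is $I$. Hence the total number of iterations is $T_1 + I$.

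Next I would run the token argument: to each (processor, time step) pair assign one token, placed in a \emph{work bucket} if the corresponding instruction belongs to a busy iteration and in an \emph{idle bucket} otherwise. On one hand the total number of tokens is exactly $PT$. On the other hand, the tokens in the work bucket total the combined length of all busy iterations, which is at most $C \cdot T_1$, and those in the idle bucket total the combined length of all idle iterations, at most $C \cdot I$. Combining the two counts gives $PT \le C\left(T_1 + I\right)$, so $T = O\left(\frac{T_1}{P} + \frac{I}{P}\right)$, as claimed.

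This argument is essentially bookkeeping, so I anticipate no deep obstacle; the only point needing care is the claim that every processor executes an instruction at every step until the computation terminates (so that the total token count is exactly $PT$). This holds because a processor with no assigned node does not halt but keeps performing idle iterations, and because any truncated final iteration only lowers the token counts, leaving the inequality $PT \le C\left(T_1 + I\right)$ intact.
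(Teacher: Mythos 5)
Your proposal is correct and follows essentially the same token-counting argument as the paper's proof: both partition the iterations into exactly $T_{1}$ busy and $I$ idle ones, use the constant $C$ to bound the length of each iteration, and conclude $PT \leq C\left(T_{1} + I\right)$ (the paper phrases this dually, placing one token per iteration and noting that at least $P$ tokens accumulate every $C$ steps, but the bookkeeping is identical).
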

\begin{proof}
Consider two buckets to which we add tokens during the computation's execution: the \textit{busy} bucket and the \textit{idle} bucket.
At the end of each iteration, every processor places a token into one of these buckets.
If a processor executed a node during the iteration, it places a token into the busy bucket, and otherwise, it places a token into the idle bucket.
Since we have $P$ processors, for each $C$ consecutive steps, at least $P$ tokens are placed into the buckets.

Because, by definition, the computation has $T_{1}$ nodes, there will be exactly $T_{1}$ tokens in the busy bucket when the computation's execution ends.
Moreover, as $I$ denotes the number of idle iterations, it also corresponds to the number of tokens in the idle bucket when the computation's execution ends.
Thus, exactly $T_{1} + I$ tokens are collected during the computation's execution.
Taking into account that for each $C$ consecutive steps at least $P$ tokens are placed into the buckets, we conclude the number of steps required to collect all the tokens is at most $C . \left(\frac{T_{1}}{P} + \frac{I}{P}\right)$.
After collecting all the $T_{1}$ tokens, the computation's execution terminates, implying the execution time is at most $O\left(\frac{T_{1}}{P} + \frac{I}{P}\right)$.
\end{proof}


\begin{lemma}
\label{lemma:app-syncfree-iteration}
Consider a processor $p$ executing a busy iteration such that $p$'s $targeted$ flag is set to \textsc{false} when $p$ checks it at the beginning of the iteration.
If the execution of $p$'s assigned node enables one or more nodes, or, if the private part of $p$'s \spd\ is not empty, then, no \mfence\ instruction is required during the execution of the iteration.
\end{lemma}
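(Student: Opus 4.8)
The plan is to walk through a single busy iteration under the two hypotheses and verify, instruction by instruction, that none of the operations that could emit an \mfence\ is actually executed, discharging the two deque methods that \emph{are} executed by appealing to \Cref{lemma:syncfree-push,lemma:syncfree-pop}. Since the iteration is busy, $p$ owns an assigned node when it reaches the \textsc{ValidNode}($assigned$) test (line~7), so the test succeeds and $p$ follows the execute-branch (lines~8--17) rather than the \textbf{else}-branch. In particular the \textsc{WorkMigration} call --- and with it the only \textsl{popTop} invocation reachable from the loop body, the one carrying a \cas\ --- never happens.

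First I would dispose of the opening conditional (lines~3--5). Because $p$'s $targeted$ flag reads \textsc{false} at the start of the iteration, the body of that \textbf{if} is skipped in its entirety: neither the \textsl{updateBottom} of line~4 (which writes the publicly readable $officialBottom$) nor the flag-reset of line~5 is executed. The sole interaction with the flag is the single uncached read on line~3, and since keeping a field uncached is a question of placement rather than ordering, reading one field emits no \mfence. This rules out \textsl{updateBottom} and the $targeted$ write as sources of synchronization.

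Next I would split on the hypotheses. If executing $p$'s assigned node enables one or more nodes, then \texttt{length}($enabled$) $> 0$ and control takes lines~9--12: $assigned$ is overwritten locally, and, only when two nodes are enabled, a single \textsl{push} is performed; by \Cref{lemma:syncfree-push} this \textsl{push} needs no \mfence, and no other deque method is touched. If instead no node is enabled, the hypothesis guarantees that the private part of $p$'s \spd\ is non-empty, so $p$ invokes \textsl{pop} (line~15). The crucial point is that a non-empty private part means $privateBottom \neq officialBottom$, so the guard of \textsl{pop} fails, the method returns a genuine node rather than \textsc{race}, and the \textsc{race} test (line~16) therefore prevents \textsl{popBottom} (line~17), the second \cas-bearing method, from ever being entered; \Cref{lemma:syncfree-pop} then certifies that this \textsl{pop} emits no \mfence.

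Putting the cases together, the only deque methods executed during the iteration are at most one \textsl{push} and at most one \textsl{pop}, while \textsl{updateBottom}, \textsl{popBottom}, and \textsl{popTop} are each provably avoided, and the conclusion follows from \Cref{lemma:syncfree-push,lemma:syncfree-pop}. I expect the main obstacle to be precisely the argument that \textsl{popBottom} is unreachable: it rests on translating ``the private part is non-empty'' into the concrete condition $privateBottom \neq officialBottom$ and on confirming that \textsl{pop} returns \textsc{race} \emph{exactly} when that condition fails, so that a non-empty private part forces \textsl{pop} to succeed. Once this is pinned down, everything else is immediate.
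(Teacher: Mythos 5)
Your proposal is correct and follows essentially the same route as the paper's proof: walk through the busy iteration, observe that the \textsc{false} $targeted$ flag skips the \textsl{updateBottom} branch, case-split on the number of enabled nodes, and discharge the only deque methods actually reached (at most one \textsl{push}, and one \textsl{pop} that cannot return \textsc{race} when the private part is non-empty, so \textsl{popBottom} is never entered) via Lemmas~\ref{lemma:syncfree-push} and~\ref{lemma:syncfree-pop}. The only difference is one of emphasis: the paper additionally argues at each step that the control and data dependencies between the scheduler's own instructions (the flag test, the assigned-node test, the \textsc{race} test) forbid any harmful reordering, whereas you frame the argument purely in terms of which fence-bearing operations are reachable --- a presentational rather than substantive difference.
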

\begin{proof}
Consider the \cws\ algorithm, depicted in Algorithm~\ref{algo:cws}.
The first action $p$ takes for the execution of that iteration is checking the value of its $targeted$ flag (line 3).
Since, by the statement of this lemma $p$'s $targeted$ flag is set to \textsc{false} at the moment when $p$ checks the flag's value, $p$ does not enter the \textsl{then} branch of the \textsf{if} statement.
Moreover, as a consequence of the conditional statement of line 3, there is a control dependency that does not allow the instructions succeeding the conditional expression to be reordered with the evaluation of the condition, implying no \mfence\ instruction is required until this point.

%
After that, $p$ checks if it has an assigned node.
Again, since the next action $p$ takes depends on its currently assigned node, there is a control dependency from the instruction where $p$ checks if it has a currently assigned node to both branches of the if statement.
Thus, no instruction reordering between the evaluation of the condition and any of the instructions succeeding that evaluation can be made, implying no \mfence\ instruction is required until here.

Because we assumed $p$ was executing a busy iteration, by the definition of a busy iteration, $p$ must have an assigned node.
Hence, $p$ executes its assigned node.
Since the next action $p$ takes depends on the outcome of that node's execution, there is a control dependency between the execution of $p$'s assigned node and the execution of the sequence of instructions corresponding to each of the possible outcomes.
Hence, no instruction reordering can be made, implying no \mfence\ instruction is necessary until this point.

From that node's execution, three outcomes are possible:
\begin{description}
  \item[0 nodes enabled] In this case, $p$ invokes the \textsl{pop} method to its own \spd.
    By Lemma~\ref{lemma:syncfree-pop}, the invocation does not require the execution of a \mfence\ instruction.
    Furthermore, the next instruction (line 16) has a data dependency on the value of $p$'s assigned node, for which reason it cannot be reordered with the invocation of the \textsl{pop} method and so no \mfence\ instruction is required.

    Since we have assumed that the private part of $p$'s \spd\ was not empty, it is trivial to conclude that the \textsl{pop} invocation returns a node, which immediately becomes $p$'s new assigned node.
    Thus, after having a new node assigned $p$ takes no further action during the iteration, meaning no \mfence\ instruction was required for the execution of the iteration in this situation.

  \item[1 node enabled] In this case the enabled node becomes $p$'s new assigned node.
    Next, $p$ checks the number of nodes that were enabled.
    The assignment of one of the enabled nodes and the instruction where $p$ checks the number of nodes enabled can be reordered.
    Fortunately, because $enabled$ is a local variable that is solely accessed by $p$, there is no harm for a parallel execution if the instructions are reordered and so no \mfence\ instruction is required for this case as well.
    Because $p$ enabled a single node it takes no further action during the iteration, implying the lemma holds in this situation as well.

  \item[2 nodes enabled] Finally, for this case one of the enabled nodes becomes $p$'s new assigned node.
    Using the same reasoning as for the case where a single node was enabled, we conclude that no \mfence\ instruction is required at least until the evaluation of the conditional statement.
    Because $p$ enabled two nodes, $p$ enters the conditional expression and pushes the node it did not assign into the bottom of its \spd, by invoking the \textsl{push} method.
    Since there is a control dependency between the execution of this instruction and the evaluation of the condition, no instruction reordering is allowed.
    Thus, no \mfence\ instruction is required between these two instructions.

    Finally, Lemma~\ref{lemma:syncfree-push} states that an invocation to the \textsl{push} method does not require a \mfence\ instruction to be executed.
    Because after the invocation $p$ takes no further action during the iteration, we deduce the lemma holds, concluding its proof.
\end{description}
\end{proof}

The following lemma is a consequence of Lemma~\ref{lemma:syncfree-iteration} (corresponding to Lemma~\ref{lemma:app-syncfree-iteration} of the appendix) and states that the number of \cas\ and \mfence\ instructions executed during a computation's execution using \cws\ only depends on the number of idle iterations and processors.
\begin{lemma}
\label{lemma:app-cas-and-mem-barriers-by-throws}
Consider any computation being executed by the \cws\ algorithm, using $P$ processors.
The number of \cas\ and \mfence\ instructions executed by processors during the computation's execution is at most $O\left(I + P\right)$, where $I$ denotes the total number of idle iterations executed by processors.
\end{lemma}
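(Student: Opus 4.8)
The plan is to classify every scheduling iteration and charge each \cas\ or \mfence\ instruction to a constant-size class of iterations whose cardinality I can bound by $O(I + P)$. First I would localize the two instruction types. Inspecting \Cref{algo:spdeque}, a \cas\ appears only inside \textsl{popTop} and \textsl{popBottom}, each executing at most one. For \mfence\ instructions I would combine \Cref{lemma:syncfree-push} and \Cref{lemma:syncfree-pop} (so that \textsl{push} and \textsl{pop} never force a fence) with \Cref{lemma:syncfree-iteration}, which guarantees that a busy iteration reading $targeted=$\textsc{false} and either enabling a node or having a nonempty private part needs no fence. Consequently every \mfence\ lives in one of three classes of iterations: (1) idle iterations; (2) iterations that read $targeted=$\textsc{true}; or (3) busy iterations in which \textsl{pop} returns \textsc{race} and \textsl{popBottom} is therefore invoked. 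Every \cas\ lives in class (1) (through \textsl{popTop}, called once per idle iteration) or class (3) (through \textsl{popBottom}). Since all methods are constant-time and each iteration runs a constant number of instructions, it suffices to bound the sizes of these three classes.

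Class (1) has exactly $I$ members by definition, contributing $O(I)$. For class (2) I would observe that a $targeted$ flag is set to \textsc{true} only in \textsl{WorkMigration} (line 29 of \Cref{algo:cws}), at most once per call, and \textsl{WorkMigration} runs exactly once per idle iteration; since a processor resets its own flag to \textsc{false} immediately after every \textsc{true} read (line 5) and, by the non-cached-flag assumption, reads and writes to the flag are coherent, the number of \textsc{true} reads over the whole execution is at most the number of \textsc{true} settings, hence at most $I$.

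The crux is bounding class (3), i.e.\ the number of \textsl{popBottom} invocations, which I would do by splitting on the return value. When \textsl{popBottom} returns a node it removes one entry from the public part of the deque; but entries reach the public part only via \textsl{updateBottom}, which executes only in a class-(2) iteration and promotes at most one node per call. Hence at most $I$ nodes are ever made public, so at most $I$ invocations of \textsl{popBottom} can return a node. When \textsl{popBottom} returns \textsc{empty}, the processor's assigned node becomes \textsc{empty} and it turns into a thief, so this iteration is a busy-to-idle transition; each such transition either opens a fresh maximal block of consecutive idle iterations --- and there are at most $I$ such blocks, each containing at least one idle iteration --- or is the single terminal transition after which the processor leaves the scheduling loop at termination, of which there is at most one per processor. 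This yields at most $I + P$ empty-returning invocations, so class (3) has size at most $2I + P$.

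Summing the three classes (overlaps only strengthen the upper bound) and multiplying by the constant per-iteration instruction count gives $O(I + P)$, as required. The main obstacle is precisely the class-(3) accounting: one must recognize that successful \textsl{popBottom} calls are paid for by earlier \textsl{updateBottom} calls (and thus, transitively, by \textsc{true}-settings), while empty-returning calls are exactly busy-to-idle transitions; the additive $P$ enters only through the at-most-one terminal transition per processor that does not open a genuine idle block.
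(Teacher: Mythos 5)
Your proposal is correct and follows essentially the same route as the paper's proof: it localizes \cas\ to \textsl{popTop}/\textsl{popBottom} and \mfence\ to the iterations not covered by Lemmas~\ref{lemma:syncfree-push}, \ref{lemma:syncfree-pop} and~\ref{lemma:syncfree-iteration}, then charges idle iterations and $targeted=$\textsc{true} iterations to $I$, successful \textsl{popBottom} calls to prior \textsl{updateBottom} promotions (hence to notifications, hence to $I$), and empty-returning \textsl{popBottom} calls to a following idle iteration or to the per-processor terminal transition, yielding the extra $+P$. The only difference is organizational --- you merge the \cas\ and \mfence\ accounting into one three-class decomposition where the paper treats them in two passes --- so no further comparison is needed.
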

\begin{proof}



By observing Algorithms~\ref{algo:cws} and~\ref{algo:spdeque}, it is easy to see that only invocations to \textsl{popBottom} or \textsl{popTop} methods can lead to the execution of \cas\ instructions.
Furthermore, both these methods are invoked at most once per scheduling iteration, and, for both, at most one \cas\ instruction is executed per invocation.
Since processors only invoke the \textsl{popTop} method when executing idle iterations, the number of \cas\ instructions caused by invocations to \textsl{popTop} is $O\left(I\right)$.
On the other hand, processors only invoke the \textsl{popBottom} method during busy iterations where the private part of their \spd\ is empty and the execution of their currently assigned node does not enable any new nodes.
Let $p$ denote some processor executing one such iteration.
From $p$'s invocation to the \textsl{popBottom} method two outcomes are possible:
\begin{description}
  \item[A node is returned] In this case the public part of $p$'s \spd\ was not empty implying $p$ had previously transferred a node from the private part of its \spd\ to the public part.
    By observing Algorithm~\ref{algo:cws} it is easy to deduce that $p$ only makes these node transfers if some thief had previously set $p$'s $targeted$ flag to \textsc{true}.
    Moreover, because after transferring the node $p$ immediately sets its $targeted$ flag back to \textsc{false}, the number of times $p$ makes such node transfers is at most the number of times it is targeted by a steal attempt.
    Taking into account that processors only make steal attempts during the execution of idle iterations, and make exactly one steal attempt for each such iteration, exactly $I$ steal attempts take place during a computation's execution.
    As such, the number of \cas\ instructions executed in situations like this one is at most $O\left(I\right)$.

  \item[\textsc{Empty} is returned] In this case $p$ will not have an assigned node at the end of the scheduling iteration's execution.
    Thus, after $p$ finishes executing the iteration two scenarios may occur:
    \begin{description}
      \item[$p$ executes an idle iteration] For this case we can create a mapping from idle iterations to each busy iteration that precedes an idle iteration, implying there can be at most $O\left(I\right)$ such iterations.
        With this, it is trivial to conclude that the number of \cas\ instructions executed by \cws\ for situations equivalent to this one is at most $O\left(I\right)$.
      \item[The computation's execution terminates] Because there are exactly $P$ processors, at most $P$ scheduling iterations can precede the end of a computation's execution.
        Consequently, the number of \cas\ instructions executed for scenarios equivalent to this one is at most $O\left(P\right)$.
    \end{description}
\end{description}

Summing up all the possible scenarios, the number of \cas\ instructions executed by \cws\ is at most $O\left(I + P\right)$.


We now turn to the number of \mfence\ instructions executed during a computation's execution.
To that end, we first bound the number of scheduling iterations that can contain \mfence\ instructions.
Consider any scheduling iteration $s$ during a computation's execution, and let $p$ denote the processor that executed the iteration.
Iteration $s$ was either an idle or a busy iteration.
\begin{description}
  \item[$s$ is an idle iteration] By definition, at most $I$ iterations are idle, implying there are $O\left(I\right)$ such iterations that could contain \mfence\ instructions.

  \item[$s$ is a  busy iteration] When $p$ checks its $targeted$ flag, one of the two following situations arises:
      \begin{description}
        \item[$targeted$ is \textsc{true}] By observing Algorithm~\ref{algo:cws} we conclude that such a situation can only occur if another processor $q$ has set $p$'s $targeted$ to \textsc{true}, which can only occur if $q$ was executing an idle iteration.
          After executing the conditional statement, $p$ resets its $targeted$ flag back to \textsc{false}.
          Thus, the total number of busy iterations where a processor has its flag set to $targeted$ is at most $I$, because each such iteration can be mapped by an idle iteration.
          Consequently, the number of iterations similar to this one is at most $O\left(I\right)$.
        \item[$targeted$ is \textsc{false}] As $p$ is executing a busy iteration, it will execute the node it has assigned.
          From that node's execution, either 0, 1 or 2 other nodes can be enabled.
            \begin{description}
              \item[More than 0 nodes are enabled] Lemma~\ref{lemma:syncfree-iteration} (corresponding to Lemma~\ref{lemma:app-syncfree-iteration} of the appendix) implies that no \mfence\ instruction is executed in this case.
              \item[0 nodes are enabled] In this case, $p$ cannot immediately assign a new node, because it did not enable any.
                By Algorithm~\ref{algo:cws}, $p$ will then invoke the \textsl{pop} method to its own \spd.
                With this, one of two possible situations arises:
                  \begin{description}
                    \item[\spd's private part is not empty] As a consequence of Lemma~\ref{lemma:syncfree-iteration} (corresponding to Lemma~\ref{lemma:app-syncfree-iteration} of the appendix), no \mfence\ instruction is executed in this case.
                    \item[\spd's private part is empty] In this case, by observing Algorithm~\ref{algo:spdeque} we conclude that the invocation returns the special value \textsc{race}, implying $p$ will make an invocation to \textsl{popBottom} still during that same iteration.
                      From that invocation, two outcomes are possible:
                        \begin{description}
                          \item[A node is returned] In this situation, $p$ assigns the node.
                            By observing Algorithm~\ref{algo:cws} it is trivial to conclude that this scenario only arises if some processor previously set $p$'s $targeted$ flag to \textsc{true}.
                            As a consequence, $p$ transfered a node from the private part of its \spd\ to the public part.
                            Again, using the same reasoning as for the case where $p$'s $targeted$ flag is set to \textsc{true}, we conclude the number of such iterations is at most $O\left(I\right)$.
                          \item[\textsc{empty} is returned] After $p$ finishes executing the current scheduling iteration $s$, two scenarios may occur:
                              \begin{description}
                                \item[$p$ executes an idle iteration] It is trivial to deduce that we can create a mapping from idle iterations to each iteration satisfying the same conditions as $s$.
                                  Thus, there can be at most $O\left(I\right)$ such iterations.
                                \item[The computation's execution terminates] Since there are exactly $P$ processors, at most $P$ scheduling iterations can precede the end of a computation's execution.
                                  Consequently, there are at most $P$ scheduling iteration similar to $s$.
                              \end{description}
                        \end{description}
                  \end{description}
            \end{description}
      \end{description}
\end{description}

Now, we sum up all the scheduling iterations that may contain \mfence\ instructions.
Accounting with all possible scenarios it follows that at most $O\left(I + P\right)$ scheduling iterations may contain \mfence\ instructions.
Since any scheduling iteration is composed by at most $C$ instructions, at most $C$ \mfence\ instructions can be executed per iteration, implying the number of \mfence\ instructions executed during a computation's execution is at most $O\left(I + P\right)$.
\end{proof}

The following lemma is a formalization of the arguments already given in~\cite{DBLP:journals/mst/AroraBP01}, but considering the potential function we present.

\begin{lemma}
\label{lemma:app-potential properties}
Consider some node $u$, ready at step $i$ during the execution of a computation.
\begin{enumerate}
  \item If $u$ gets assigned to a processor at that step, the potential drops by at least $\frac{3}{4}\phi_{i}\left(u\right)$.
  \item If $u$ becomes stealable at that step, the potential drops by at least $\frac{3}{4}\phi_{i}\left(u\right)$.
  \item If $u$ was already assigned to a processor and gets executed at that step $i$, the potential drops by at least $\frac{47}{64}\phi_{i}\left(u\right)$.
\end{enumerate}
\end{lemma}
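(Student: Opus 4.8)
The plan is to prove each of the three parts by directly computing the change in the total potential $\Phi_{i}$ caused by the single event in question, exploiting two facts. First, the weight $w(u) = T_{\infty} - d(u)$ of a node is fixed once the node is enabled, since it depends only on the node's depth in the enabling tree and not on its scheduling state (assigned, stealable, or otherwise). Second, whenever the execution of $u$ enables a child $x$, that child sits one level deeper in the enabling tree, so $w(x) = w(u) - 1$ and hence $4^{3w(x)} = 4^{-3}\cdot 4^{3w(u)}$. With these two observations each event reduces to comparing a handful of powers of $4$.

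For part 1 I would observe that a node which is ready at step $i$ and gets assigned during that step is, just before the assignment, in one of two states: either ``otherwise'' (ready in a private part, contributing $4^{3w(u)}$) or ``stealable'' (contributing $4^{3w(u)-1}$); after assignment it contributes $4^{3w(u)-2}$. The drop is therefore $4^{3w(u)} - 4^{3w(u)-2} = \frac{15}{16}\phi_{i}(u)$ in the first case and $4^{3w(u)-1} - 4^{3w(u)-2} = \frac{3}{4}\phi_{i}(u)$ in the second, so the minimum over the two, namely $\frac{3}{4}\phi_{i}(u)$, gives the claim. Part 2 is analogous and simpler: becoming stealable takes $u$ from ``otherwise'' ($4^{3w(u)}$) to ``stealable'' ($4^{3w(u)-1}$), a drop of exactly $4^{3w(u)} - 4^{3w(u)-1} = \frac{3}{4}\phi_{i}(u)$.

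Part 3 carries the real content. Here $u$ is assigned, contributing $\phi_{i}(u) = 4^{3w(u)-2}$, and is executed, so it leaves the ready set entirely while up to two freshly enabled children appear. I would split on the number of enabled children, using Algorithm~\ref{algo:cws} to place them. With zero children the drop is all of $\phi_{i}(u)$; with one child it becomes the new assigned node contributing $4^{3(w(u)-1)-2} = 4^{3w(u)-5}$, so the drop is $4^{3w(u)-2} - 4^{3w(u)-5} = \frac{63}{64}\phi_{i}(u)$; and with two children one becomes assigned ($4^{3w(u)-5}$) while the other is pushed into the private part and is therefore ``otherwise'' ($4^{3(w(u)-1)} = 4^{3w(u)-3}$). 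Normalizing everything by $4^{3w(u)-5}$, so that $\phi_{i}(u) = 64\cdot 4^{3w(u)-5}$, the two-child drop is $(64 - 1 - 16)\cdot 4^{3w(u)-5} = 47\cdot 4^{3w(u)-5} = \frac{47}{64}\phi_{i}(u)$. As this is the smallest of the three sub-cases, it yields the stated bound.

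The only genuinely delicate point, rather than the arithmetic, is justifying the state classification of the children in part 3 and, more generally, arguing that the transitions listed above are the only ones possible. For this I would appeal to Algorithm~\ref{algo:cws}, which dictates that of two enabled children exactly one becomes the assigned node and the other is pushed into the private part (hence ``otherwise'' at the instant it is created), and to Corollary~\ref{corollary:structural corollary} to confirm that the resulting configuration remains consistent with a newly pushed node being private rather than stealable. Once these placements are pinned down, each part collapses to a one-line computation, and the worst cases ($\frac{3}{4}$, $\frac{3}{4}$, and $\frac{47}{64}$ respectively) fall out immediately.
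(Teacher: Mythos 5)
Your proposal is correct and follows essentially the same route as the paper's own proof: parts 1 and 2 are the same direct comparisons of powers of $4$ (with the same worst case, $\frac{3}{4}\phi_{i}(u)$, arising when $u$ was already stealable), and part 3 is the same case split on zero, one, or two enabled children using $w(x)=w(u)-1$, yielding the identical drops $\phi_{i}(u)$, $\frac{63}{64}\phi_{i}(u)$, and $\frac{47}{64}\phi_{i}(u)$. The only cosmetic difference is your normalization by $4^{3w(u)-5}$ in the two-child case and the (harmless, if unnecessary) appeal to Corollary~\ref{corollary:structural corollary} for the placement of the pushed child, which the paper justifies directly from Algorithm~\ref{algo:cws}.
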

\begin{proof}
Regarding the first claim, if $u$ was stealable the potential decreases from $4^{3w\left(u\right) - 1}$ to $4^{3w\left(u\right)-2}$.
Otherwise, the potential decreases from $4^{3w\left(u\right)}$ to $4^{3w\left(u\right)-2}$, which is even more than in the previous case.
Given that $4^{3w\left(u\right) - 1} - 4^{3w\left(u\right)-2} = \frac{3}{4}\phi_{i}\left(u\right)$, we conclude that if $u$ gets assigned the potential decreases by at least $\frac{3}{4}\phi_{i}\left(u\right)$.

Regarding the second one, note that $u$ was not stealable (because it became stealable at step $i$) and so the potential decreases from $4^{3w\left(u\right)}$ to $4^{3w\left(u\right)-1}$.
So, if $u$ becomes stealable, the potential decreases by $4^{3w\left(u\right)} - 4^{3w\left(u\right)-1} = \frac{3}{4}\phi_{i}\left(u\right)$.

We now prove the last claim.
Remind that, by our conventions regarding computations' structure, each node within a computation's dag can have an out-degree of at most two.
Consequently, each node can be the designated parent of at most two other ones in the enabling tree.
Moreover, by definition, the weight of any node is strictly smaller than the weight of its designated parent, since it is deeper in the enabling tree than its designated parent.
Consider the three possible scenarios:
\begin{description}
  \item[0 nodes enabled] The potential decreased by $\phi_{i}\left(u\right)$.
  \item[1 node enabled] The enabled node becomes the assigned node of the processor (that executed $u$).
    Let $x$ denote the enabled node. Since $x$ is the child of $u$ in the enabling tree, it follows $\phi_{i}\left(u\right) - \phi_{i+1}\left(x\right) = 4^{3w\left(u\right) - 2} - 4^{3w\left(x\right) - 2} = 4^{3w\left(u\right) - 2} - 4^{3\left(w\left(u\right) - 1 \right) - 2} = \frac{63}{64}\phi_{i}\left(u\right)$.
    Thus, for this situation, the potential decreases by $\frac{63}{64}\phi_{i}\left(u\right)$.
  \item[2 nodes enabled] In this case, one of the enabled nodes immediately becomes the assigned node of the processor whist the other is pushed onto the bottom of the \spd's private part.
    Let $x$ denote the enabled node that becomes the processor's new assigned node and $y$ the other enabled node.
    Since both $x$ and $y$ have $u$ as their designated parent in the enabling tree, we have $\phi_{i}\left(u\right) - \phi_{i+1}\left(x\right) - \phi_{i+1}\left(y\right) = 4^{3w\left(u\right) - 2} - 4^{3w\left(x\right)} - 4^{3w\left(y\right) - 2} = \frac{47}{64}\phi_{i}\left(u\right)$.
    As such, the potential decreases by $\frac{47}{64}\phi_{i}\left(u\right)$, concluding the proof of the lemma.
\end{description}
\end{proof}

The following lemma is a direct consequence of Corollary~\ref{corollary:structural corollary} (corresponding to Corollary~\ref{corollary:app-structural corollary} of the appendix) and of the potential function's properties.
The result is a variant of~\cite[Top-Heavy Deques]{DBLP:journals/mst/AroraBP01}, considering \spds~instead of the conventional fully concurrent deques, and our potential function, instead of the original.

\begin{lemma}
\label{lemma:app-top-heavy spdeques}
Consider any step $i$ and any processor $p \in D_{i}$.
The top-most node $u$ in $p$'s \spd\ contributes at least $\frac{4}{5}$ of the potential associated with $p$.
That is, we have $\phi_{i}\left(u\right) \geq \frac{4}{5}\Phi_{i}\left(p\right)$.
\end{lemma}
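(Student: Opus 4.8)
The plan is to reduce the statement to a bound on the tail of a rapidly decaying geometric series. Write $u = v_k$ for the topmost node and set $W = w(v_k)$. Since $p \in D_i$ its \spd\ is nonempty, so the nodes attached to $p$ are $v_1,\ldots,v_k$ (with $k \ge 1$) together with the assigned node $v_0$, if any; crucially, $v_0$ is \emph{not} one of the deque nodes, so $v_k$ is never the assigned node and is therefore either stealable or private. This already yields the lower bound $\phi_i(v_k) \ge 4^{3W-1}$, which is the one delicate point: the worst case is that the top node is stealable (potential $4^{3W-1}$) rather than private (potential $4^{3W}$).

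Next I would convert the strict weight inequalities of Corollary~\ref{corollary:app-structural corollary} into explicit bounds. Because the weights $w(\cdot)=T_\infty - d(\cdot)$ are integers and $w(v_1) < \cdots < w(v_k) = W$, a short induction gives $w(v_j) \le W - (k-j)$ for $j=1,\ldots,k$, together with $w(v_0) \le w(v_1) \le W-(k-1)$. Feeding these into the potential function, always upper-bounding each node by the largest admissible type (that is, $\phi_i(v_j) \le 4^{3w(v_j)}$ for deque nodes and $\phi_i(v_0) \le 4^{3w(v_0)-2}$ for the assigned node), bounds every term of $\Phi_i(p)$ other than $\phi_i(v_k)$ by a power of $4$ that decays geometrically as we descend the deque.

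The heart of the argument is then the estimate
\[
  \Phi_i(p) - \phi_i(v_k) \;\le\; 4^{3W}\!\left(\sum_{m=1}^{k-1} 64^{-m} + \tfrac{1}{16}\,64^{-(k-1)}\right),
\]
where the sum collects the $k-1$ deque nodes below the top (each level of depth contributing a factor $4^{-3}=1/64$) and the last term accounts for the assigned node; this bound is valid whether or not $v_0$ is present. I would then show that the bracketed quantity $f(k)$ is maximized at $k=1$, where it equals exactly $\tfrac{1}{16}$: a one-line computation of $f(k)-f(k-1)$ shows $f$ is strictly decreasing, so $f(k)\le \tfrac1{16}$ for all $k\ge 1$. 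Hence $\Phi_i(p)-\phi_i(v_k) \le 4^{3W-2} = \tfrac14\,4^{3W-1} \le \tfrac14\,\phi_i(v_k)$, which rearranges to $\phi_i(u)\ge \tfrac45\,\Phi_i(p)$, as claimed.

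I expect the main obstacle to be bookkeeping rather than anything conceptual: correctly handling the assigned node $v_0$ (whose $-2$ in the exponent must be retained) and confirming that the constant $\tfrac45$ is tight, which occurs precisely in the degenerate case $k=1$ with an assigned node of the same weight as the single deque node. Everything else is forced by the geometric decay that the integrality of weights and the strict monotonicity of the Structural Lemma guarantee.
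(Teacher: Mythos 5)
Your proof is correct and rests on exactly the same ingredients as the paper's own argument: the strictly increasing integer weights from Corollary~\ref{corollary:app-structural corollary} force a $4^{-3}$ geometric decay down the \spd, the assigned node is handled via its $-2$ exponent, and the extremal case is a single stealable deque node together with an assigned node of equal weight, where $\frac{4}{5}$ is attained. The only difference is presentational --- you sum the geometric tail in closed form and check that $f(k)$ is maximized at $k=1$, whereas the paper packages the identical computation as an induction on the number of deque nodes --- so this is essentially the paper's proof.
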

\begin{proof}
This lemma follows from Corollary~\ref{corollary:structural corollary} (corresponding to Corollary~\ref{corollary:app-structural corollary} of the appendix).
We prove it by induction on the number of nodes within $p$'s \spd.
\begin{description}
  \item[Base case] As the base case, consider that $p$'s \spd\ contains a single node $u$.
  The processor itself can either have or not an assigned node.
  For the second scenario, we have $\phi_{i}\left(u\right) = \Phi_{i}\left(p\right)$.
  Regarding the first case, let $x$ denote $p$'s assigned node.
  Corollary~\ref{corollary:structural corollary} implies that $w\left(u\right) \geq w\left(x\right)$.
  It follows $\Phi_{i}\left(q\right) = \phi_{i}\left(u\right) + \phi_{i}\left(x\right) = 4^{3w\left(u\right) - 1} + 4^{3w\left(x\right) - 2} \leq \frac{5}{4}\phi_{i}\left(u\right)$.
  Thus, if $p$'s \spd\ contains a single node we have $\Phi_{i}\left(q\right) \leq \frac{5}{4}\phi_{i}\left(u\right)$.

  \item[Induction step] Consider that $p$'s \spd\ now contains $n$ nodes, where $n \geq 2$, and let $u,\,x$ denote the topmost and second topmost nodes, respectively, within the \spd.
  For the purpose of induction, let us assume the lemma holds for all the first $n - 1$ nodes (\textit{i.e.}~without accounting with $u$):
  $\Phi_{i}\left(q\right) - \phi_{i}\left(u\right) \leq \frac{5}{4}\phi_{i}\left(x\right)$.
  Corollary~\ref{corollary:structural corollary} (corresponding to Corollary~\ref{corollary:app-structural corollary} of the appendix) implies $w\left(u\right) > w\left(x\right) \equiv w\left(u\right) - 1 \geq w\left(x\right)$.
  It follows
  $\Phi_{i}\left(q\right) \leq \frac{5}{4}\phi_{i}\left(x\right) + \phi_{i}\left(u\right) = \frac{5}{4}4^{3w\left(x\right)} + 4^{3w\left(u\right)} \leq \frac{5}{4}4^{3\left(w\left(u\right) - 1\right)} + 4^{3w\left(u\right)} = \frac{261}{256}\phi_{i}\left(u\right) < \frac{5}{4}\phi_{i}\left(u\right)$
  concluding the proof of the lemma.
\end{description}
\end{proof}

The following result is a consequence of Lemma~\ref{lemma:top-heavy spdeques} (corresponding to Lemma~\ref{lemma:app-top-heavy spdeques} of the appendix).

\begin{lemma}
\label{lemma:app-potential-decrease}
Suppose a thief processor $p$ chooses a processor $q \in D_{i}$ as its victim at some step $j$, such that $j \geq i$\ (\textit{i.e.}~a steal attempt of $p$ targeting $q$ occurs at step $j$).
Then, at step $j + 2C$, the potential decreased by at least $\frac{3}{5}\Phi_{i}\left(q\right)$ due to either the assignment of the topmost node in $q$'s \spd, or for making the topmost node of $q$'s \spd\ become stealable.
\end{lemma}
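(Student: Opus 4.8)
The plan is to reduce the whole statement to the behaviour of a single node, namely the node $u$ that is topmost in $q$'s \spd\ at step $i$, and then argue that the steal attempt forces $u$ to change state within the $2C$ window. By Lemma~\ref{lemma:app-top-heavy spdeques} we have $\phi_i(u) \geq \frac{4}{5}\Phi_i(q)$, so it suffices to exhibit a potential drop of at least $\frac{3}{4}\phi_i(u)$, since $\frac{3}{4}\cdot\frac{4}{5}\Phi_i(q) = \frac{3}{5}\Phi_i(q)$. First I would record a structural observation that makes $u$ a stable target: because \textsl{push} only inserts at the bottom of the private part, no node is ever placed above $u$, and hence $u$ remains the topmost node of $q$'s \spd\ until it is removed, throughout the interval $[i,\,j+2C]$.

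Next I would split on the location of $u$ at step $i$. \textbf{If $u$ is stealable} (in the public part) at step $i$, the goal is to show $u$ becomes assigned by step $j+2C$, which by Lemma~\ref{lemma:app-potential properties}(1) costs $\frac{3}{4}\phi_i(u)$. If $u$ has already been removed before step $j$, that assignment happened at a step $\le j$ and we are done; otherwise $u$ is still the topmost public node at step $j$, so the \textsl{popTop} invoked by the thief $p$ at step $j$ finds a non-empty public part and, completing in at most $C$ steps, either returns $u$ (a success, assigning $u$) or aborts, in which case condition~(4) of the \spdrs\ guarantees that a concurrent invocation removed the topmost node $u$, again assigning it. Either way $u$ is assigned by step $j+C$. \textbf{If instead $u$ is private} at step $i$, the goal is to show $u$ becomes stealable (or is assigned) by step $j+2C$; by Lemma~\ref{lemma:app-potential properties}(2) becoming stealable costs $\frac{3}{4}\phi_i(u)$, and a direct pop to the owner costs strictly more. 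If $u$ left the private part before step $j$ we are again done; otherwise $u$ is still topmost and private at step $j$, which forces the public part to be empty (any public node would sit above $u$). Then the thief's \textsl{popTop} returns \textsc{empty} within $C$ steps and $p$ sets $q$'s $targeted$ flag to \textsc{true}; within the next $C$ steps $q$ runs a scheduling iteration, reads its flag at line~3, and calls \textsl{updateBottom}, transferring the topmost private node---namely $u$---into the public part, so $u$ is stealable by step $j+2C$.

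Finally I would assemble the numeric bound. In the stealable case the event drops the potential by $4^{3w(u)-1}-4^{3w(u)-2}=\frac{3}{4}\phi_i(u)$, and in the private case by $4^{3w(u)}-4^{3w(u)-1}=\frac{3}{4}\phi_i(u)$ (or by the larger $4^{3w(u)}-4^{3w(u)-2}=\frac{15}{16}\phi_i(u)$ if $u$ is popped directly), so in every case the potential decreases by at least $\frac{3}{4}\phi_i(u)\ge\frac{3}{5}\Phi_i(q)$ no later than step $j+2C$, as required. The hard part will be the timing bookkeeping behind the constant $2C$---splitting the budget into the at-most-$C$ steps for \textsl{popTop} to linearize and for $p$ to set the flag, plus the at-most-$C$ steps before $q$ next inspects its flag and reacts---together with the relaxed-semantics reasoning for the aborting \textsl{popTop}, where one must invoke condition~(4) of the \spdrs\ rather than the return value itself to conclude that the topmost node was indeed removed and thereby assigned.
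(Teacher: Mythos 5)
Your proposal is correct and follows essentially the same route as the paper's proof: isolate the topmost node $u$ of $q$'s \spd\ at step $i$, show the steal attempt forces $u$ to be assigned or made stealable within the $2C$-step window (using condition~(4) of the \spdrs\ for aborts and the $targeted$-flag/\textsl{updateBottom} chain for the empty-public case), and combine Lemma~\ref{lemma:app-top-heavy spdeques} with Lemma~\ref{lemma:app-potential properties} to get $\frac{3}{4}\cdot\frac{4}{5}\Phi_{i}(q)=\frac{3}{5}\Phi_{i}(q)$. The only cosmetic difference is that you organize the cases by the location of $u$ at step $i$ whereas the paper organizes them by the outcome of the \textsl{popTop} invocation; the content is the same.
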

\begin{proof}
Let $u$ denote the topmost node of $q$'s \spd\ at the beginning of step $i$.
We first prove that $u$ either gets assigned or becomes stealable.

Three possible scenarios may take place due to $p$'s steal attempt targeting $q$'s \spd.
\begin{description}
  \item[The invocation returns a node] If $p$ stole $u$, then, $u$ gets assigned to $p$.
    Otherwise, some other processor removed $u$ before $p$ did, implying $u$ got assigned to that other processor.

  \item[The invocation aborts] Since the \spd\ implementation meets the \spdrs~on any good set of invocations, and because the \cws\ algorithm only makes good sets of invocations, we conclude that some other processor successfully removed a topmost node from $q$'s \spd\ during the aborted steal attempt made by $p$.
  If the removed node was $u$, $u$ gets assigned to a processor (that may either be $q$, or, some other thief that successfully stole $u$).
  Otherwise, $u$ must have been previously stolen by a thief or popped by $q$, and thus became assigned to some processor.

  \item[The invocation returns \textsc{empty}] This situation can only occur if either $q$'s \spd\ is completely empty, or if there is no node in the public part of $q$'s \spd.
    \begin{itemize}
      \item For the first case, since $q \in D_{i}$, some processor must have successfully removed $u$ from $q$'s \spd.
        Consequently, $u$ was assigned to a processor.

      \item If there was no node in the public part of $q$'s \spd, $p$ sets $q$'s $targeted$ flag to \textsc{true} in a later step ${j}'$.
        Recall that, for each $C$ consecutive instructions executed by a processor, at least one corresponds to a milestone.
        It follows that ${j}' \leq j + C$.
        Furthermore, by observing Algorithm~\ref{algo:cws}, we conclude that $q$ will make and complete an invocation to \textsl{updateBottom} of its \spd\ in one of the $C$ steps succeeding step ${j}'$.
        Thus if $q$'s \spd's private part is not empty, a node will become stealable.
        From that invocation, only two possible situations can take place:
          \begin{description}
            \item[No node becomes stealable] In this case, the private part of $q$'s \spd\ was empty, implying some processor (either $q$ or some thief) assigned $u$.

            \item[A node becomes stealable] If the node that became stealable as the result of the invocation was not $u$, then either $u$ was assigned by a processor (that could have been $q$ or some thief), or $u$ had already been transfered to the public part of $q$'s \spd\ as a consequence of another thief's steal attempt that also returned \textsc{empty}, implying that either $u$ became assigned, or it became stealable.
              Otherwise, the node that became stealable as a result of the \textsl{updateBottom}'s invocation was $u$.
              Thus, in any case, $u$ either gets assigned to a processor or becomes stealable.
          \end{description}
    \end{itemize}
\end{description}
With this, we conclude that $u$ either became assigned or became stealable until step $j + 2C$.

From Lemma~\ref{lemma:top-heavy spdeques} (corresponding to Lemma~\ref{lemma:app-top-heavy spdeques} of the appendix), we have $\phi_{i}\left(u\right) \geq \frac{4}{5}\Phi_{i}\left(q\right)$.
Furthermore, Lemma~\ref{lemma:potential properties} (corresponding to Lemma~\ref{lemma:app-potential properties} of the appendix) proves that if $u$ gets assigned the potential decreases by at least $\frac{3}{4}\phi_{i}\left(u\right)$, and if $u$ becomes stealable the potential also decreases by at least $\frac{3}{4}\phi_{i}\left(u\right)$.
Because $u$ is either assigned or becomes stealable in any case, we conclude the potential associated with $q$ at step $j + 2C$ has decreased by at least $\frac{3}{5}\Phi_{i}\left(q\right)$.
\end{proof}


The following lemma is trivial a generalization of the original result presented in~\cite[Balls and Weighted Bins]{DBLP:journals/mst/AroraBP01}.
The only difference between the two results is the assumption of having at least $B$ balls, rather than exactly $B$ balls.
Its proof is only presented for the sake of completion, and is (trivially) adapted from the proof of~\cite[Balls and Weighted Bins]{DBLP:journals/mst/AroraBP01}.

\begin{lemma}[Balls and Weighted Bins]
\label{lemma:app-balls and weighted bins}
Suppose we are given at least $B$ balls, and exactly $B$ bins.
Each of the balls is tossed independently and uniformly at random into one of the $B$ bins, where for $i = 1,\ldots,\,B$, bin $i$ has a weight $W_{i}$.
The total weight is $W = \sum_{i = 1}^{B} W_{i}$.
For each bin $i$, we define the random variable $X_{i}$ as \[X_{i} = \left\{\begin{matrix}
W_{i} & \text{if some ball lands in bin } i\\
0 & \text{otherwise}
\end{matrix}\right.\]
and define the random variable $X$ as $X = \sum_{i = 1}^{B} X_{i}$.

Then, for any $\beta$ in the range $0 < \beta < 1$, we have $P\left\{X \geq \beta W\right\} \geq 1 - \frac{1}{\left(1 - \beta\right)\euler}$.
\end{lemma}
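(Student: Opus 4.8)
The plan is to prove the complementary bound by applying Markov's inequality to the nonnegative random variable $W - X$, which measures the total weight carried by the \emph{empty} bins. First I would write $W - X = \sum_{i=1}^{B} W_i\, Y_i$, where $Y_i$ is the indicator of the event that no ball lands in bin $i$, and observe that the target event $\{X \ge \beta W\}$ coincides exactly with $\{W - X \le (1-\beta)W\}$. This reformulation is what turns a lower-tail statement about $X$ into an upper-tail statement about a nonnegative quantity, which is the natural setting for Markov.

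Next I would bound $E[W-X]$. Letting $m \ge B$ denote the number of balls, each thrown independently and uniformly, the probability that bin $i$ receives no ball is $(1 - 1/B)^m$. The key numerical fact I would establish is $(1-1/B)^m \le (1-1/B)^B \le \euler^{-1}$: the first inequality uses the hypothesis $m \ge B$ (additional balls can only make a fixed bin more likely to be hit, i.e.\ the empty-probability is nonincreasing in $m$), and the second is the standard estimate $(1-1/B)^B \le \euler^{-1}$. By linearity of expectation, $E[W-X] = \sum_{i=1}^{B} W_i (1-1/B)^m \le W/\euler$.

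Finally, since $W - X \ge 0$ and $(1-\beta)W > 0$, Markov's inequality gives $P\{W - X \ge (1-\beta)W\} \le E[W-X]\big/\big((1-\beta)W\big) \le 1\big/\big((1-\beta)\euler\big)$. Because $P\{W-X > (1-\beta)W\} \le P\{W-X \ge (1-\beta)W\}$, taking complements yields $P\{X \ge \beta W\} = P\{W-X \le (1-\beta)W\} \ge 1 - 1\big/\big((1-\beta)\euler\big)$, which is precisely the claimed bound.

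The argument is short, and the only genuine subtlety — the single place where the ``at least $B$ balls'' generalization over the original ``exactly $B$ balls'' version matters — is justifying $(1-1/B)^m \le \euler^{-1}$ uniformly for all $m \ge B$; I expect that monotonicity step, rather than the Markov step, to be the crux. I would also sanity-check the degenerate cases (for instance $W = 0$, or $B = 1$) to confirm that the probabilities and the final bound stay well-defined.
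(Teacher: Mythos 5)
Your proposal is correct and follows essentially the same route as the paper's proof: bound $E[W-X]$ via linearity and $(1-1/B)^{m}\leq \euler^{-1}$ for $m\geq B$, then apply Markov's inequality to $W-X$ and take complements. Your explicit justification of the monotonicity in the number of balls (the only place the ``at least $B$'' hypothesis is used) is a point the paper's proof passes over silently, but the argument is identical in substance.
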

\begin{proof}
Consider the random variable $W_{i} - X_{i}$ taking the value of $W_{i}$ when no ball lands in bin $i$ and $0$ otherwise, and let ${B}'$ denote the total number of balls that are tossed.
It follows
$E\left[W_{i} - X_{i}\right] = W_{i}\left(1 - \frac{1}{B}\right)^{{B}'} \leq \frac{W_{i}}{\euler}$.
From the linearity of expectation, we have $E\left[W-X\right] \leq \frac{W}{\euler}$.
Markov's Inequality then implies
$P\left\{W - X > \left(1 - \beta \right )W\right\} = P\left\{X < \beta W\right\} \leq \frac{E\left[W - X\right]}{\left(1 - \beta \right )W} \leq \frac{1}{\left(1 - \beta \right )\euler}$.
%
\end{proof}


The following result states that for each $P$ idle iterations that take place, with constant probability the potential drops by a constant factor.
An analogous lemma was originally presented in~\cite[Lemma 8]{DBLP:journals/mst/AroraBP01} for the non-blocking Work Stealing algorithm.
The result is a consequence of Lemmas~\ref{lemma:balls and weighted bins} and~\ref{lemma:potential-decrease} (corresponding to Lemmas~\ref{lemma:app-balls and weighted bins} and~\ref{lemma:app-potential-decrease} of the appendix, respectively) and its proof follows the same traits as the one presented in that study.

\begin{lemma}
\label{lemma:app-phase potential decrease}
Consider any step $i$ and any later step $j$ such that at least $P$ idle iterations occur from $i$\ (inclusive) to $j$\ (exclusive).
Then, we have \[P\left\{\Phi_{i} - \Phi_{j + 2C} \geq \frac{3}{10}\Phi_{i}\left(D_{i}\right)\right\} > \frac{1}{4}.\]
\end{lemma}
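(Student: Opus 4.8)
The plan is to recast the phase as an instance of the Balls and Weighted Bins lemma (Lemma~\ref{lemma:balls and weighted bins}). I would treat the steal attempts issued during the at-least-$P$ idle iterations as \emph{balls}, and the $P$ processors as \emph{bins}. Since each idle iteration issues exactly one steal attempt through \textsc{WorkMigration}, and this attempt targets a victim chosen uniformly at random and independently via \textsc{UniformlyRandomProcessor}, the resulting tosses are independent and uniform over the $P$ bins; having at least $P$ idle iterations means at least $P = B$ balls, which is precisely the regime covered by the generalized statement of Lemma~\ref{lemma:balls and weighted bins}. For the weights I would set $W_q = \Phi_i(q)$ when $q \in D_i$ and $W_q = 0$ when $q \in A_i$, so that the total weight is $W = \sum_{q \in D_i}\Phi_i(q) = \Phi_i(D_i)$.

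Applying Lemma~\ref{lemma:balls and weighted bins} with $\beta = \frac{1}{2}$ then yields $P\{X \geq \frac{1}{2}\Phi_i(D_i)\} \geq 1 - \frac{2}{e} > \frac{1}{4}$, where $X$ is the total weight of the bins in $D_i$ that receive at least one ball. I would verify the final strict inequality numerically, since $1 - 2/e \approx 0.264 > 0.25$. Thus, with probability exceeding $\frac{1}{4}$, the processors in $D_i$ that are targeted by at least one steal attempt during the phase together account for at least half of $\Phi_i(D_i)$.

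The remaining step converts this hit-weight into an actual potential drop. On the event $X \geq \frac{1}{2}\Phi_i(D_i)$, I would invoke Lemma~\ref{lemma:potential-decrease} for each targeted $q \in D_i$: since the steal attempt on $q$ occurs at some step $s$ with $i \leq s < j$, the lemma guarantees that by step $s + 2C \leq j + 2C$ the potential has dropped by at least $\frac{3}{5}\Phi_i(q)$, owing to the assignment or exposure of the topmost node $u_q$ of $q$'s \spd. Because the $u_q$ are distinct nodes across distinct processors and the total potential is monotonically non-increasing, these per-processor drops are additive and are all realized by step $j + 2C$. Summing over the targeted processors gives $\Phi_i - \Phi_{j+2C} \geq \frac{3}{5}X \geq \frac{3}{5}\cdot\frac{1}{2}\Phi_i(D_i) = \frac{3}{10}\Phi_i(D_i)$, and the probability bound carries over directly.

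The main obstacle I anticipate is the careful timing and additivity bookkeeping in the last paragraph: one must confirm that every relevant drop guaranteed by Lemma~\ref{lemma:potential-decrease} is completed no later than step $j + 2C$ (which follows from $s < j$), and that the drops attributed to different targeted processors do not overlap, so that they genuinely sum. The latter relies on the fact that each drop is charged to the distinct topmost node $u_q$ and that potential never increases, so simultaneous or intervening events can only push $\Phi_{j+2C}$ lower. Choosing $\beta = \frac{1}{2}$ is also delicate, as it is what simultaneously keeps $1 - \frac{1}{(1-\beta)e}$ above $\frac{1}{4}$ while producing the target constant $\frac{3}{10}$ after multiplying by the $\frac{3}{5}$ factor.
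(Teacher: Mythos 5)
Your proposal is correct and follows essentially the same route as the paper's proof: steal attempts as balls, processors as bins, Lemma~\ref{lemma:balls and weighted bins} with $\beta = \tfrac{1}{2}$, combined with the per-victim drop of Lemma~\ref{lemma:potential-decrease}. The only (immaterial) difference is that the paper folds the $\tfrac{3}{5}$ factor directly into the bin weights, setting $W_p = \tfrac{3}{5}\Phi_i(p)$, whereas you keep $W_q = \Phi_i(q)$ and apply the $\tfrac{3}{5}$ at the end; your extra care about the additivity and timing of the drops is a welcome elaboration of a point the paper leaves implicit.
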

\begin{proof}
By Lemma~\ref{lemma:potential-decrease} (corresponding to Lemma~\ref{lemma:app-potential-decrease} of the appendix) we know that for each processor $p \in D_{i}$ that is targeted by a steal attempt, the potential drops by at least $\frac{3}{5}\Phi_{i}\left(p\right)$, at most $2C$ steps after being targeted.

When executing an idle iteration, a processor plays the role of a thief attempting to steal work from some victim.
Thus, since $P$ idle iterations occur from step $i$\ (inclusive) to step $j$\ (exclusive), at least $P$ steal attempts take place during that same interval.
We can think of each such steal attempt as a ball toss of Lemma~\ref{lemma:balls and weighted bins} (corresponding to Lemma~\ref{lemma:app-balls and weighted bins} of the appendix).

For each processor $p$ in $D_{i}$, we assign it a weight $W_{p} = \frac{3}{5}\Phi_{i}\left(p\right)$, and for each other processor $p$ in $A_{i}$, we assign it a weight $W_{p} = 0$.
Clearly, the weights sum to $W = \frac{3}{5}\Phi_{i}\left(D_{i}\right)$.
Using $\beta = \frac{1}{2}$ in Lemma~\ref{lemma:balls and weighted bins} (Lemma~\ref{lemma:app-balls and weighted bins} of the appendix) it follows that with probability at least $1 - \frac{1}{\left(1 - \beta\right)\euler} > \frac{1}{4}$,
the potential decreases by at least
$\beta W = \frac{3}{10}\Phi_{i}\left(D_{i}\right)$,
concluding the proof of this lemma.
\end{proof}


Finally, we bound the expected number of idle iterations that take place during a computation's execution using the \cws\ algorithm.
The result follows from Lemma~\ref{lemma:phase potential decrease} (corresponding to Lemma~\ref{lemma:app-phase potential decrease} of the appendix) and is proved using similar arguments as the ones used in the proof of~\cite[Theorem 9]{DBLP:journals/mst/AroraBP01}.
The presented proof corresponds to an adaptation of the one originally presented for the just mentioned Theorem.

\begin{lemma}
\label{lemma:app-bounded-idle-iterations}
Consider any computation with work $T_{1}$ and critical-path length $T_{\infty}$ being executed by \cws\ using $P$ processors.
The expected number of idle iterations is at most $O\left(P T_{\infty}\right)$.
Moreover, with probability at least $1 - \varepsilon$ the number of idle iterations is at most $O\left(\left(T_{\infty} + \ln\left(\frac{1}{\varepsilon}\right)\right)P\right)$.
\end{lemma}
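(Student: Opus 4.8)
The plan is to mirror the proof of \cite[Theorem 9]{DBLP:journals/mst/AroraBP01}, partitioning the idle iterations into \emph{phases} and showing that each phase drives the total potential down by a constant factor with constant probability. Concretely, I would let the first phase begin at step $i_1 = 1$; given that phase $k$ begins at step $i_k$, I let $j_k$ be the first step by which at least $P$ idle iterations have occurred since $i_k$, and declare phase $k$ to span $[i_k, i_{k+1})$ with $i_{k+1} = j_k + 2C$. The $2C$ offset is chosen precisely so that Lemma~\ref{lemma:app-phase potential decrease} applies to each phase with $i = i_k$ and $j = j_k$, and so that the closing potential $\Phi_{j_k + 2C}$ of phase $k$ is exactly the opening potential $\Phi_{i_{k+1}}$ of phase $k+1$; since no scheduler action ever raises the potential (every relevant event is covered by Lemma~\ref{lemma:app-potential properties}), the per-phase drops chain multiplicatively. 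Each phase contains $\Theta(P)$ idle iterations: at least $P$ by construction, and at most $O(P)$ since a single step contributes at most $P$ idle iterations and the buffer spans only $2C$ steps.

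Next I would establish that each phase is \emph{successful} --- meaning $\Phi_{i_{k+1}} \leq \tfrac{7}{10}\Phi_{i_k}$ --- with probability greater than $\tfrac14$, conditioned on the entire history up to step $i_k$. Lemma~\ref{lemma:app-phase potential decrease} already yields a drop of at least $\tfrac{3}{10}\Phi_{i_k}(D_{i_k})$ with probability exceeding $\tfrac14$, but this only controls the $D_{i_k}$ part and is vacuous when every deque is empty. The missing ingredient is a \emph{deterministic} bound on the $A_{i_k}$ part: each processor $q \in A_{i_k}$ has an empty deque, so $\Phi_{i_k}(q)$ equals the potential of its assigned node (or $0$), and the $2C$ buffer guarantees that within the phase $q$ executes that node, which by part (3) of Lemma~\ref{lemma:app-potential properties} lowers the potential by at least $\tfrac{47}{64}\Phi_{i_k}(q)$; summing gives a deterministic drop of at least $\tfrac{47}{64}\Phi_{i_k}(A_{i_k})$. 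The crucial observation is that the two contributions are charged to \emph{disjoint} sets of nodes --- the top-most deque nodes of $D_{i_k}$-processors versus the assigned nodes of $A_{i_k}$-processors --- so the drops are additive, giving $\Phi_{i_k} - \Phi_{i_{k+1}} \geq \tfrac{3}{10}\Phi_{i_k}(D_{i_k}) + \tfrac{47}{64}\Phi_{i_k}(A_{i_k}) \geq \tfrac{3}{10}\Phi_{i_k}$ with probability greater than $\tfrac14$.

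Finally I would count phases. The computation begins with the single root node assigned, so the initial potential is at most $4^{3T_\infty}$, and whenever the potential is positive it is at least $4^{-2}$ (the value of an assigned weight-$0$ node); hence after $m = O(T_\infty)$ successful phases the potential reaches $0$ and the execution is over. Since each phase succeeds with conditional probability exceeding $\tfrac14$ regardless of the configuration at its start, the number of successes among any $n$ phases stochastically dominates a $\mathrm{Binomial}(n,\tfrac14)$ variable; a standard expectation bound then shows $O(T_\infty)$ phases suffice in expectation to accumulate $m$ successes, and a Chernoff bound shows $O(T_\infty + \ln(1/\varepsilon))$ phases suffice with probability at least $1-\varepsilon$. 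Multiplying by the $\Theta(P)$ idle iterations per phase yields the claimed $O(PT_\infty)$ expected bound and the $O\!\left(P(T_\infty + \ln(1/\varepsilon))\right)$ high-probability bound.

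The main obstacle I anticipate is the second step: Lemma~\ref{lemma:app-phase potential decrease} by itself is insufficient (it says nothing when $\Phi_{i_k}(A_{i_k})$ dominates), so the argument genuinely needs the deterministic $A_{i_k}$-drop and, more delicately, the verification that the probabilistic $D_{i_k}$-drop and the deterministic $A_{i_k}$-drop are attributable to disjoint nodes and may therefore be added rather than merely maxed. A secondary care point is the conditioning in the concentration argument: one must check that the $>\tfrac14$ success probability holds irrespective of the adversarially-chosen state at the phase boundary, so that the phase outcomes can be dominated by independent Bernoulli trials.
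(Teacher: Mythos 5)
Your proposal follows essentially the same route as the paper's proof: the identical phase decomposition ending $2C$ steps after the $P$-th idle iteration, the split of the drop into the probabilistic $\frac{3}{10}\Phi_{i}\left(D_{i}\right)$ part from Lemma~\ref{lemma:app-phase potential decrease} and the deterministic $\frac{47}{64}\Phi_{i}\left(A_{i}\right)$ part from executing assigned nodes, the $>\frac{1}{4}$ per-phase success probability, the $O\left(T_{\infty}\right)$ bound on successful phases, and the Chernoff tail. The extra care you flag (disjointness of the charged nodes, and conditioning so that phase outcomes dominate independent Bernoulli trials) is implicit but unstated in the paper's argument, and your write-up is correct.
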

\begin{proof}
To analyze the number of idle iterations, we break the execution into \emph{phases}, each composed by $\Theta\left(P\right)$ idle iterations.
Then, we prove that, with constant probability, a phase leads the potential to drop by a constant factor.

A computation's execution begins when the root gets assigned to a processor.
By definition, the weight of the root is $T_{\infty}$, implying the potential at the beginning of a computation's execution starts at $\Phi_{0} = 4^{3T_{\infty} - 2}$.
Furthermore, it is straightforward to deduce that the potential is $0$ after (and only after) a computation's execution terminates.
We use these facts to bound the expected number of phases needed to decrease the potential down to $0$.
The first phase starts at step $t_{1} = 1$, and ends at the first step ${t_{1}}'$ such that, at least $P$ idle iterations took place during the interval $\left[t_{1},{t_{1}}' - 2C\right]$. 
The second phase starts at step $t_{2} = {t_{1}}' + 1$, and so on.

Consider two consecutive phases starting at steps $i$ and $j$ respectively.
We now prove that $P\left\{\Phi_{j} \leq \frac{7}{10}\Phi_{i}\right\} > \frac{1}{4}$.
Recall that we can partition the potential as $\Phi_{i} = \Phi_{i}\left(A_{i}\right) + \Phi_{i}\left(D_{i}\right)$.
Since, from the beginning of each phase and until its last $2C$ steps, at least $P$ idle iterations take place, then, by Lemma~\ref{lemma:phase potential decrease} (corresponding to Lemma~\ref{lemma:app-phase potential decrease} of the appendix) it follows
$P\left\{\Phi_{i} - \Phi_{j} \geq \frac{3}{10}\Phi_{i}\left(D_{i}\right)\right\} > \frac{1}{4}$.
Now, we have to prove the potential also drops by a constant fraction of $\Phi_{i} \left(A_{i}\right)$.
Consider some processor $p \in A_{i}$:
\begin{itemize}
  \item If $p$ does not have an assigned node, then $\Phi_{i}\left(p\right) = 0$.
  \item Otherwise, if $p$ has an assigned node $u$ at step $i$, then, $\Phi_{i}\left(p\right) = \phi_{i}\left(u\right)$.
  Noting that each phase has more than $C$ steps, then, $p$ executes $u$ before the next phase begins (\textit{i.e.}~before step $j$).
  Thus, the potential drops by at least $\frac{47}{64}\phi_{i}\left(u\right)$ during that phase.
\end{itemize}
Cumulatively, for each $p \in A_{i}$, it follows $\Phi_{i} - \Phi_{j} \geq \frac{47}{64}\Phi_{i}\left(A_{i}\right)$.
Thus, no matter how $\Phi_{i}$ is partitioned between $\Phi_{i}\left(A_{i}\right)$ and $\Phi_{i}\left(D_{i}\right)$, we have
$P\left\{\Phi_{i} - \Phi_{j} \geq \frac{3}{10}\Phi_{i}\right\} > \frac{1}{4}$.

We say a phase is successful if it leads the potential to decrease by at least a $\frac{3}{10}$ fraction.
So, a phase succeeds with probability at least $\frac{1}{4}$.
Since the potential is an integer, and, as aforementioned, starts at $\Phi_{0} = 4^{3T_{\infty} - 2}$ and ends at $0$, then, there can be at most
$\left(3T_{\infty} - 2\right)\log_{\frac{10}{7}}\left(4\right) < 12T_{\infty}$ successful phases.
If we think of each phase as a coin toss, where the probability that we get heads is at least $\frac{1}{4}$, then, the expected number of coins we have to toss to get heads $12T_{\infty}$ times is at most $48T_{\infty}$.
In the same way, the expected number of phases needed to obtain $12T_{\infty}$ successful ones is at most $48T_{\infty}$.
Consequently, the expected number of phases is $O\left(T_{\infty}\right)$.
Moreover, as each phase contains $O\left(P\right)$ idle iterations, the expected number of idle iterations is $O\left(PT_{\infty}\right)$.

Now, suppose the execution takes $n = 48T_{\infty} + m$ phases.
Each phase succeeds with probability greater or equal to $p = \frac{1}{4}$, meaning the expected number of successes is at least $np = 12T_{\infty} + \frac{m}{4}$.
We now compute the probability that the number of $X$ successes is less than $12T_{\infty}$.
We use the \emph{Chernoff bound}~\cite{DBLP:books/wi/AlonS92},
$P\left\{X < np - a\right\} < \euler^{-\frac{a^{2}}{2np}}$ with $a = \frac{m}{4}$.
It follows, $np - a = 12T_{\infty}$.
Choosing $m = 48T_{\infty} + 16\ln\left(\frac{1}{\varepsilon}\right)$, we have
$
P\left\{X < 12T_{\infty}\right\} < \euler^{-\frac{\left(\frac{m}{4}\right)^{2}}{2\left(12T_{\infty} + \frac{m}{4}\right)}}
\leq \euler^{-\frac{m}{16}}
\leq \euler^{-\frac{16\ln\left(\frac{1}{\varepsilon}\right)}{16}}
= \varepsilon
$.
Thus, the probability that the execution takes $96T_{\infty} + 16\ln\left(\frac{1}{\varepsilon}\right)$ phases or more, is less than $\varepsilon$.
With this we conclude that the number of idle iterations is at most
$O\left(\left(T_{\infty} + \ln\left(\frac{1}{\varepsilon}\right)\right)P\right)$
with probability at least $1 - \varepsilon$.
\end{proof}

\end{document}